\documentclass[10pt,rqno, a4paper]{article}






\usepackage{titlesec}
\titleformat*{\section}{\sc\centering\large} 
\titleformat*{\subsection}{\bf} 
\titleformat*{\subsubsection}{\it} 



\usepackage{amsmath}
\usepackage{amssymb}
\usepackage{amsfonts}
\usepackage{setspace}
\usepackage{latexsym}
\usepackage{mathrsfs}
\usepackage{epsfig}
\usepackage{amsthm}
\usepackage{color}

\usepackage[margin=2cm]{caption}






\newcommand{\be}{\begin{equation}}
\newcommand{\ee}{\end{equation}}

\newcommand{\vs}{\vspace{0.2cm}}


\usepackage{fancyhdr}
\pagestyle{fancy}


\fancyfoot{}
\fancyhead[LE,RO]{}
\fancyhead[LO,RE]{}
\fancyfoot[C]{\thepage}



\newtheorem{Theorem}{Theorem}
\newtheorem{Remark}[Theorem]{Remark}

\newtheorem{Proposition}[Theorem]{Proposition}





\addtolength{\hoffset}{-1.2cm}\addtolength{\textwidth}{1.2cm}\addtolength{\textheight}{1.5cm}\addtolength{\oddsidemargin}{.7cm}\addtolength{\evensidemargin}{-0.2cm}\addtolength{\voffset}{-.8cm}
\linespread{1.1}
\headsep = 9pt \addtolength{\headwidth}{1.3cm}



\newcommand{\dist}{d} 

\newcommand{\diss}{\displaystyle}


\newcommand{\length}{L} 

\newcommand{\Sa}{{\rm S}^{1}} 

\newcommand{\qM}{S} 

\newcommand{\hqg}{q} 


\usepackage{tocloft} 

\setlength\cftparskip{1pt}
\setlength\cftbeforesecskip{1pt}
\setlength\cftaftertoctitleskip{1pt}

\usepackage{setspace, tocloft}


\setlength\cftparskip{-.1pt}
\setlength\cftbeforesecskip{1pt}
\setlength\cftaftertoctitleskip{1pt}

\allowdisplaybreaks


\begin{document}

\thispagestyle{empty}

\begin{center}
{\Large\bf A complete classification of ${\bf \Sa}$-symmetric
\vs

static vacuum black holes}

\vs\vs\vs

{\sc Mart\'in Reiris}

{mreiris@cmat.edu.uy}

\vs

{\sc Javier Peraza}

{jperaza@cmat.edu.uy} 

\vspace{.4cm}

{\it Centro de Matem\'atica, Universidad de la Rep\'ublica} 

{\it Montevideo, Uruguay}
\vs\vs

\end{center}

\begin{abstract}
In a seminal paper of 1917, H. Weyl presented a remarkable reduction of the static axisymmetric vacuum Einstein equations, serving as a relatively straightforward technique to generate and explore new solutions. Weyl's reduction was used by Myers in 1987, and independently by Korotkin-Nicolai in 1994, to construct a new family of static and axisymmetric solutions with compact  non-empty horizon, however with non-trivial topology and asymptotically Kasner. This family, together with the Schwarzschild and the Boost families, remained until now as the only known $\Sa$-symmetric static black hole solutions, namely, (metrically complete) $\Sa$-symmetric static vacuum solutions with compact and non-empty horizon. In this article we prove that, indeed, these three families exhaust all the examples of $\Sa$-symmetric static vacuum black holes.
\end{abstract}

\section*{Introduction}

The classification of the static solutions of the vacuum Einstein equations is a central problem in Mathematical Relativity and in Geometry. In this article we will treat static solutions at the ``initial data level'', namely we will deal with a Riemannian three-manifold $(\Sigma;g)$ and a lapse function $N$, that we will assume is positive in the interior $\Sigma^{\circ}=\Sigma\setminus \partial \Sigma$ of $\Sigma$. Given the data $(\Sigma; g,N)$ the static vacuum spacetime $({\bf M}; {\bf g})$ is constructed as,
\be
{\bf M}=\Sigma\times \mathbb{R},\quad {\bf g}=-N^{2}dt^{2}+g,
\ee
and the vacuum condition ${\bf Ric}=0$ is equivalent to the static vacuum Einstein equations,
\be\label{FEE}
NRic =\nabla\nabla N,\quad \Delta N=0.
\ee
To deal with geometrically sensitive solutions we demand the metric completeness of $(\Sigma;g)$, (in substitution of geodesic completeness that won't hold here as we will assume $\partial \Sigma\neq \emptyset$ - Metrically complete solutions with empty boundary are flat with constant lapse \cite{MR1806984}). Of great interest are those solutions having compact but non-necessarily connected horizon $\partial \Sigma$, that is for which $N=0$ on the boundary $\partial \Sigma$ that is assumed compact.  Borrowing a terminology often used in theoretical physics, we call such solutions {\it static black holes}. The fundamental Schwarzschild solutions are of course asymptotically flat (AF) black holes, and are the only ones AF by the celebrated uniqueness theorem \cite{Israel},\cite{RobinsonII},\cite{MR876598}. But there are other static black holes that are not AF. A second, somehow trivial family, are the Boosts. As we describe in detail later, they are quotients of the Rindler wedge by two independent translations, and are flat and cylindrical with the lapse growing linearly from the toroidal horizon. Both, the Schwarzschild solutions and the Boosts admit $\Sa$-symmetries. A third family of static $\Sa$-symmetric black holes was found by Myers in 1987 \cite{PhysRevD.35.455}, and independently by Korotkin-Nicolai in 1994 \cite{94aperiodic} using Weyl's reduction of the vacuum static and axisymmetric equations \cite{ANDP:ANDP19173591804}. The M/KN black holes, as we will call them from now on and will be discussed in detail later, have non-trivial topology and are asymptotically Kasner (AK). These three are the only known families of static $\Sa$-symmetric black holes (and in fact of static black holes). Figure \ref{Fig1} depicts the characteristics of each type of solution. In this article we address the problem of classifying $\Sa$-symmetric static black holes and prove that there are no more examples than those of these three families. This is achieved as follows. In Theorem \ref{MAINTHEOREM} we prove that if an axisymmetric static black hole has the topology and the Kasner asymptotic of a M/KN black hole, then it is a M/KN black hole. We devote the whole paper to prove this theorem. Then, combining this result with the theorem proved in \cite{PartI} and \cite{PartII}, stating that any static black hole is either a Schwarzschild black hole, a Boost, or is of Myers/Korotkin-Nicolai type, (that is, it has the same topology and asymptotic as the M/KN black holes), we obtain, as claimed, that any $\Sa$-symmetric static black hole is either a Schwarzschild black hole, a Boost or a Myers/Korotkin-Nicolai black hole. We state this corollary as Theorem \ref{CLASTHEO}.
\vs

\begin{figure}
\centering
\includegraphics[scale=.5]{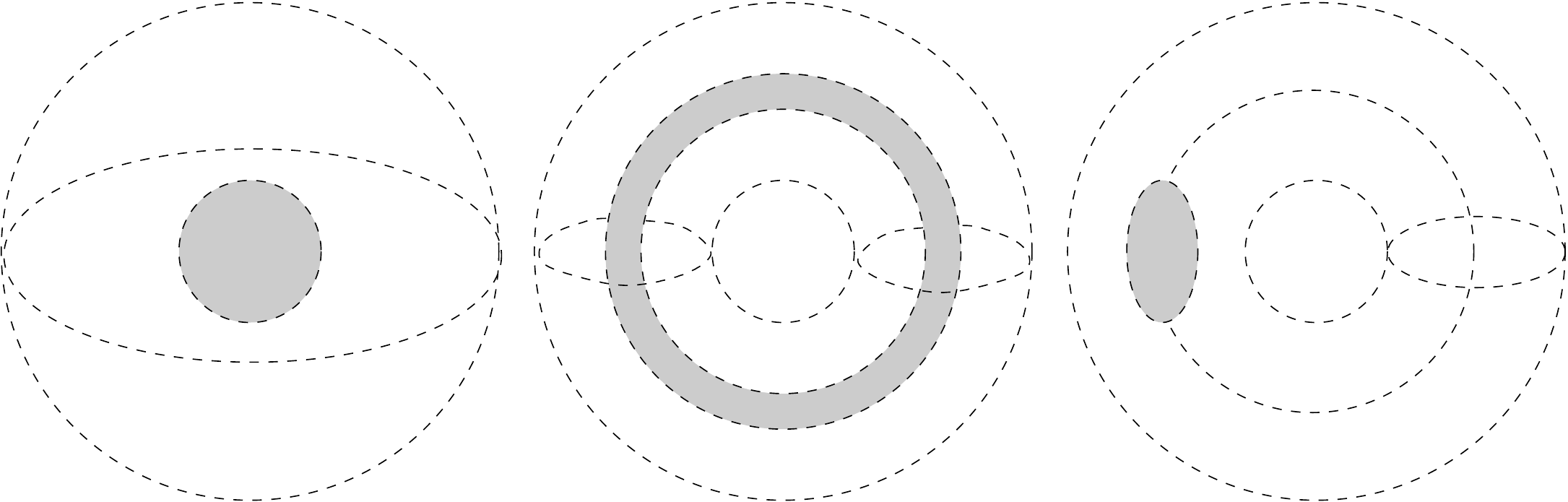}
\caption{A idealization of black holes in each family. From left to right: Schwarschild bhs, Boosts bhs, Myers/Korotkin-Nicolai bhs (with one horizon component in this last case).}
\label{Fig1}
\end{figure}

In the following we describe the technical aspects of the paper. We review again the notion of static data set $(\Sigma;g,N)$, describe the three main families of static black holes earlier mentioned and finally state the main result and comment on the way of proof.

As we said, we will work with static black hole data sets $(\Sigma; g, N)$ that condensate the notion of static black hole at the initial data level (we leave the spacetime picture aside). A static black hole data set consists of a metrically complete smooth orientable three-manifold $(\Sigma;g)$ with non-empty compact boundary, and a smooth function $N$ positive on $\Sigma^{\circ}=\Sigma\setminus \partial \Sigma$ and zero on $\partial \Sigma$ called the lapse, satisfying the static vacuum Einstein equations (\ref{FEE}). Observe that the second equation of (\ref{FEE}) implies that $\Sigma$ is necessarily non-compact (use the maximum principle). The best well known examples are the Schwarzschild black holes given by,
\be
\Sigma=\mathbb{R}^{3}\setminus B(0,2m),\quad g=\frac{1}{1-2m/r}dr^{2}+r^{2}d\Omega^{2},\quad N=\sqrt{1-2m/r},
\ee
with $m>0$ being the mass and the parameter of the family ($B(0,2m)$ is the Euclidean open ball in $\mathbb{R}^{3}$ and radius $2m$).

A data set $(\Sigma;g,N)$ is $\Sa$-symmetric if there is a non-trivial $\Sa$-action $\Sa\times \Sigma\rightarrow \Sigma$ leaving $g$ and $N$ invariant. The $\Sa$-symmetric solution is axisymmetric if the set of fixed points of the action is non-empty. In such case, the set of fixed points are a union of closed disjoint geodesic segments (of finite or infinite length), called the axis of the data. For instance, the Schwarzschild black holes are axisymmetric, where the action is just any $\Sa$-action by rotations on $\mathbb{R}^{3}$. A somehow trivial (but important) family of $\Sa$-symmetric static black holes are the Boosts, defined by the quotient of the data,
\be
\Sigma=\mathbb{R}^{+}_{0}\times \mathbb{R}^{2},\quad g=dx^{2}+dy^{2}+dz^{2},\quad N=x,
\ee
by two linearly independent translations on the $y$-$z$ plane ($y$ and $z$ define the factor $\mathbb{R}^{2}$). Thus, the manifold of a Boost is diffeomorphic to $\Sa\times \Sa\times \mathbb{R}^{+}_{0}$ and the horizon is a two-torus ($\mathbb{R}^{+}_{0}=[0,\infty)$). Boosts are $\Sa$-symmetric as there are two periodic directions.

Let us see the last family, the one that we call the axisymmetric Myers/Korotkin-Nicolai black holes. They were first investigated by Myers in \cite{PhysRevD.35.455} and were rediscovered and further investigated by Korotkin and Nicolai in \cite{94aperiodic}, \cite{KOROTKIN1994229}. As we explain below, M/KN's basic construction uses first Weyl's method to `align' on an axis infinitely many Schwarzschild black holes, and then quotient the resulting static solution to obtain M/KN static black holes. 

Let us begin recalling the basics of how Weyl's reduction is. The presentation will be used in what follows to explain the M/KN construction and will be used later in the proof of the main results. 

Let us assume that $g$ and $N$ are globally of the form\footnote{Weyl showed that around almost every point the static spacetime metric can be put in this form.},
\be
g=e^{-2U}(e^{2\phi}(dz^{2}+d\rho^{2})+\rho^{2}d\varphi^{2}),\quad N=e^{U},
\ee
where the domain of Weyl's coordinates $(\rho,z)$ is $\mathbb{R}^{+}_{0}\times \mathbb{R}$ and where $\varphi$ is the axisymmetric or angular coordinate, obviously $\varphi\in \Sa\sim [0,2\pi)$. Here $U$ and $\phi$ depend only on $(\rho,z)$ and are smooth on $\rho>0$. Accepting this presentation of $g$ and $N$, the static vacuum equations are equivalent to Weyl's equations,
\begin{align}
\label{WEYLCOOR0} & U_{\rho\rho}+\frac{1}{\rho}U_{\rho}+U_{zz}=0,\\
\label{WEYLCOOR20} & \phi_{\rho}=\rho(U_{\rho}^{2}-U_{z}^{2}),\quad \phi_{z}=2\rho U_{\rho}U_{z}.
\end{align}
The equation for $U$ is linear while the second for $\phi$ is solvable by quadratures after having $U$. The basic examples that can be presented in this global form are the Schwarzschild solutions, where in this case $U=U_{S}(\rho,z)$ is,
\be
U_{S}=\ln \bigg(\frac{\sqrt{(z-m)^{2}+\rho^{2}}+\sqrt{(z+m)^{2}+\rho^{2}}-2m}{\sqrt{(z-m)^{2}+\rho^{2}}+\sqrt{(z+m)^{2}+\rho^{2}}+2m}\bigg)
\ee
and $m>0$ is the mass and the parameter of the family. This function is singular just on the segment $\{0\}\times [-m,m]$ over the axis $\rho=0$, which is indeed the projection of the horizon. The function $\phi=\phi_{S}$ turns out to be,
\be
\phi_{S}=\frac{1}{2}\ln\bigg(\frac{\rho^{2}+z^{2}-2m^{2}}{4\sqrt{\rho^{2}+(z+m)^{2}}\sqrt{\rho^{2}+(z-m)^{2}}}\bigg).
\ee
A basic property of the equation (\ref{WEYLCOOR0}) is its linearity: if $U_{1}$ and $U_{2}$ are solutions then $\lambda_{1}U_{1}+\lambda_{2}U_{2}$ is also a solution. Another basic property is the translation invariance along $z$: if $U(\rho,z)$ is a solution, then so is $U(\rho,z+L)$ for any $L$. This turns out to be crucial for the Myers and Korotkin-Nicolai construction. For example, one can sum $U_{S}(\rho,z)$ to $U_{S}(\rho,z+L)$, to obtain a solution that would be interpreted somehow as a `superposition' of two Schwarzschild black holes. If $L>2m$ then the two horizons would be disjoint. The well known problem with this, already known to Bach and Weyl that studied this configuration for the first time \cite{MR2899298}, is that after solving for $\phi$ (assuming $\phi=0$ at infinity) and after reconstructing the spacetime, a strut or conic singularity appears on the axis between the two horizons. Naturally such singularities could be interpreted as a repelling negative singular energy distribution. What Myers and Korotkin-Nicolai showed is that, instead, some infinite superpositions like,
\be\label{UMKN}
U^{\rm MKN}_{L,m}(\rho,z)=U_{S}(\rho,z)+\sum_{n=1}^{\infty}\, \big(U_{S}(\rho,z+nL)+U_{S}(\rho,z-nL)+\frac{4m}{nL}\big)
\ee
(assume $L>2m$), where the counter-terms $4m/nL$ were subtracted to make the series convergent, give, after solving for $\phi$, a spacetime without struts, where in this case infinite horizons are aligned on an axis and are separated from each other at equal lengths (the function $\phi$ can be given as a series). The asymptotic $\rho\rightarrow \infty$ of the solution can be seen to have the Kasner form,
\be
g \sim c_{1}\rho^{\alpha^{2}/2-\alpha}(dx^{2}+d\rho^{2})+c_{2}\rho^{2-\alpha}d\phi^{2},\quad N \sim c_{0}\rho^{\alpha/2},
\ee
where $\alpha=4m/L$ and so $0< \alpha<2$. 
Solutions like this (that depend on $m$ and $L$) are simply connected and for this reason will be called {\it universal Myers/Korotkin-Nicolai static solutions}. They are not static black holes as their boundary, being an infinite union of two-spheres, is not compact. To obtain static black holes we need to take suitable quotients of them. For that purpose we note that they inherit still the $\Sa$-symmetry with the action $R:\Sa\times \Sigma \rightarrow \Sigma$ given by $R(\alpha,(\rho,z,\varphi))=(\rho,z,\alpha+\varphi)$, and have also a discrete 'translational' symmetry $T:\mathbb{Z}\times \Sigma \rightarrow \Sigma$, given by $T(n\times (\rho,z,\varphi)) = (\rho,z+nL,\varphi)$. Both actions $R$ and $T$ commute. So we can quotient simultaneously by a translation and a rotation, obtaining a non-simply connected space with a finite number of spherical horizons, namely, let $\alpha \in [0,2\pi)$ and $h$ a positive integer, then we identify $(\rho,z,\varphi)$ with $(\rho,z+hL,\varphi+\alpha)$. Such quotients are examples of what we call M/KN static black holes. 

\begin{Remark} It seems not have been noted before in the literature that, fixed $m>0$ and $L>2m$ and fixed $h>0$ (i.e.  a number of horizon components), then quotienting with different values of $\alpha\in [0,2\pi)$ results in globally inequivalent solutions. So, the possible quotients are parametrised by $\mathbb{Z}\times [0,2\pi)$.
\end{Remark}  


The particular universal solution just described, with the $U$ given by (\ref{UMKN}), is one possible instance among many. As another instance one could alternate Schwarzschild solutions of masses $m_{1}$ and $m_{2}\neq m_{1}$. More explicitly, letting $L>2m_{1}+2m_{2}$, define $U$ as, 
\be
U=U^{\rm MKN}_{L,m_{1}}(\rho,z)+U^{\rm MKN}_{L,m_{2}}(\rho,z+L/2).
\ee
Then, redoing KN's  argument in \cite{94aperiodic}, one can show that the associated universal spacetime after solving for $\phi$ doesn't have struts either and so is a universal M/KN solution. Several other configurations are possible too. Quotients are taken then as earlier, by a rotation of angle $\alpha\in [0,2\pi)$ and a translation by a multiple of $L$. Without aiming to classify all the possibilities, we just define the family of M/KN static black holes as the set of all possible quotients of all the universal static data set (without struts) that can be constructed using M/KN's method. The manifold $\Sigma$ of any M/KN static black hole, is always diffeomorphic to an open three-torus minus a finite number of open three-balls and the asymptotic of the black hole is Kasner. 

What makes an axisymmetric static black hole data set $(\Sigma;g,N)$ a M/KN static black hole, is whether its universal covering space admits global Weyl's coordinates $(\rho,z)\in\mathbb{R}^{+}_{0}\times \mathbb{R}$, where $U$ takes the form,
\be
U(\rho,z)=\sum_{i=1}^{h}U^{\rm MKN}_{m_{i},L}(\rho,z+z_{i}),
\ee
for some suitable $m_{i}$ and $z_{i}$. This simple characterisation of axisymmetric M/KN static black holes will be used to prove our main Theorem \ref{MAINTHEOREM}.

A static data set is said to be of Myers/Korotkin-Nicolai type (M/KN-type) if it is asymptotically Kasner and the topology of $\Sigma$ is that of a solid torus minus a finite number of open three-balls.
\vs

In this article we will prove that any $\Sa$-symmetric static black hole data set of M/KN-type is indeed a Myers/Korotkin-Nicolai black hole.

\begin{Theorem}\label{MAINTHEOREM} Any $\Sa$-symmetric static black hole data set of M/KN type is indeed a Myers/Korotkin-Nicolai black hole.
\end{Theorem}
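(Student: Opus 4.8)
\emph{Strategy of proof.} The plan is to bring the lifted data to Weyl's global form and then to prove a Liouville-type rigidity for the axisymmetric potential $U$.

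First I would note that the $\Sa$-action necessarily has nonempty fixed-point set, so that the data is in fact axisymmetric. Indeed $\Sigma$ is a solid torus with finitely many open balls removed, hence $\partial\Sigma$ is a nonempty finite disjoint union of two-spheres, each preserved by the connected group $\Sa$; a free $\Sa$-action on $S^{2}$ is impossible since $\chi(S^{2})\neq0$, so the axis is nonempty and the rotational Killing field $\xi$ is well defined. Since removing balls does not change $\pi_{1}$, we also have $\pi_{1}(\Sigma)\cong\mathbb Z$, and the universal cover $\widetilde\Sigma$ is a $\mathbb Z$-periodic chain: an infinite solid cylinder with a periodic family of balls removed, carrying the lifted data $(\widetilde g,\widetilde N)$, the lifted $\Sa$-symmetry, and a free properly discontinuous $\mathbb Z$-action by deck transformations commuting with $\Sa$.

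The core of the argument is to build \emph{global} Weyl coordinates on $\widetilde\Sigma$. Following the classical recipe one puts $\rho:=\widetilde N\,|\xi|_{\widetilde g}$; the static equations (\ref{FEE}) imply that $\rho$ descends to a harmonic function on the orbit surface away from the axis, and one takes $z$ to be its harmonic conjugate and $\varphi$ the angle. The hard part — and where completeness and the Kasner asymptotic genuinely enter — is to show that $(\rho,z)$ extends to a global diffeomorphism of the orbit surface onto $\mathbb R^{+}_{0}\times\mathbb R$: one has to exclude interior critical points of $\rho$ (combining the maximum principle for $\widetilde N$ with the equation satisfied by $|\xi|$), analyse the degeneration of $\rho$ along the axis at a horizon (which produces the projected segments) and its regular behaviour on the remaining axis, and use the assumed asymptotic $\widetilde N\sim c_{0}\rho^{\alpha/2}$, $\widetilde g\sim c_{1}\rho^{\alpha^{2}/2-\alpha}(dz^{2}+d\rho^{2})+c_{2}\rho^{2-\alpha}d\varphi^{2}$ to deduce that $\rho$ is proper and $z$ surjective. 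Granting this, $\widetilde N=e^{U}$ with $U$ solving $U_{\rho\rho}+\rho^{-1}U_{\rho}+U_{zz}=0$ on $\rho>0$ and smooth there, and $\phi$ is recovered from $U$ by the quadrature (\ref{WEYLCOOR20}).

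It remains to read off the rod structure and conclude. Compactness of $\partial\Sigma$ forces the singular set of $U$ on $\rho=0$ to consist of finitely many segments per period, repeated $\mathbb Z$-periodically; and since the deck generator is an isometry fixing $\rho$, it acts in Weyl coordinates as $z\mapsto z+L$ for a single $L>0$ — the period $L$ of the statement. Near each horizon segment $[a_{i},b_{i}]$ the regularity of the data and the equations (\ref{FEE}) force the local geometry to be that of the Schwarzschild horizon, i.e.\ $U$ is, up to a smooth term, the logarithmic potential of a rod of linear mass density $\tfrac12$ on $[a_{i},b_{i}]$, with $m_{i}=(b_{i}-a_{i})/2$. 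Letting $z_{i}$ be the centre of the $i$-th segment in a fundamental period and $W:=\sum_{i=1}^{h}U^{\rm MKN}_{m_{i},L}(\rho,z+z_{i})$ (the counterterms $4m_{i}/nL$ ensuring convergence), the difference $V:=U-W$ solves the axisymmetric Laplace equation on $\rho>0$ and extends smoothly across $\rho=0$, since the rod singularities cancel and the remaining axis points are regular; thus $V$ is a smooth axisymmetric harmonic function on $\mathbb R^{3}$. The leading behaviour of $U$ at $\rho\to\infty$ is the logarithmic term fixed by the total rod charge per period, which is the same for $W$ and is precisely what determines the Kasner exponent $\alpha$; hence $V$ is bounded and therefore constant by Liouville. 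Absorbing the constant into the trivial rescaling of $N$ fixed by the Kasner normalisation gives $U=\sum_{i=1}^{h}U^{\rm MKN}_{m_{i},L}(\rho,z+z_{i})$; $\phi$ is then the potential it determines, and the absence of struts on the intervals of the axis between consecutive horizons is automatic from the smoothness of $(\Sigma;g,N)$. By the characterisation of M/KN black holes recalled above, $(\Sigma;g,N)$ is a Myers/Korotkin-Nicolai black hole.
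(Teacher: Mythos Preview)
Your architecture matches the paper's: reduce to Weyl form, read off the rod data, and prove that $U$ coincides (up to an additive constant) with the M/KN potential determined by those rods. Working on the universal cover rather than directly on the $\Sa$-quotient surface $S=\Sigma/\Sa$, and phrasing the last step as Liouville on $\mathbb R^{3}$ rather than as the divergence identity $\int\rho\,|dw|^{2}\,dz\,d\rho=\oint\rho\,w\,w_{\rho}\,dz$ that the paper uses, are cosmetic reformulations; both routes require the same local analysis near the horizon rods and near the poles.

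There is, however, a genuine gap at the step you yourself flag as ``the hard part''. Before one can run any no-critical-point argument for $\rho=N|\xi|$ and conclude that $(\rho,z)$ is a global chart, one must already know that the orbit surface is a half-infinite cylinder, i.e.\ $S\cong\Sa\times\mathbb R^{+}_{0}$. This is not a consequence of the tools you list: knowing that $\Sigma$ is an open solid torus minus balls and that the $\Sa$-action has fixed points on each horizon sphere does not by itself determine the quotient, and the maximum principle for $\rho$ together with the Kasner asymptotic control only the end, not the compact part. If $\partial S$ had several circles, or $S$ had positive genus, the argument that every regular level set of $\rho$ is a single circle isotopic to $\partial S$ breaks down, and with it the Poincar\'e--Hopf step excluding critical points. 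The paper disposes of these two obstructions separately: extra boundary circles are ruled out by lifting an arc joining two of them to an embedded sphere in the solid torus with nonzero intersection number with an axis component (impossible, since spheres in solid tori bound balls), and positive genus is ruled out by cutting along the lift of a non-separating loop to build a two-ended double cover of the data, contradicting the one-end theorem of \cite{PartI}. Your sketch provides no substitute for either argument. A smaller but related omission: your claim that the leading logarithmic coefficients of $U$ and $W$ agree (so that $V$ is bounded and Liouville applies) is exactly the flux identity $\int_{\rho=\rho_{0}}\rho\,U_{\rho}\,dz=L_{1}+\cdots+L_{h}$, and evaluating this as $\rho_{0}\to0$ requires the pole analysis that the paper carries out in detail and that you leave implicit.
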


Note that $\Sa$-symmetric M/KN-black holes are indeed axisymmetric because their horizons, being two-spheres and $\Sa$-symmetric, have always two fixed points (the poles). Thus, we won't loose generality if inside the statement of the Theorem \ref{MAINTHEOREM} we replace $\Sa$-symmetric by axisymmetric.
\vs

Now, between \cite{PartI} and \cite{PartII} it was proved that any static black hole data set is either a Schwarzschild black hole, a Boost, or is of Myers/Korotkin-Nicolai type. Combining this result with Theorem \ref{MAINTHEOREM} we obtain the following complete classification of axisymmetric static black hole data sets.

\begin{Theorem}[Classification theorem]\label{CLASTHEO} Any vacuum $\Sa$-symmetric static black hole is either a Boost, a Schwarzschild black hole, or a Myers/Korotkin-Nicolai black hole.
\end{Theorem}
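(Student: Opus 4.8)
\emph{Plan.} Theorem \ref{CLASTHEO} follows at once from Theorem \ref{MAINTHEOREM}: by \cite{PartI} and \cite{PartII} every static black hole data set is a Schwarzschild black hole, a Boost, or of Myers/Korotkin-Nicolai type; the first two already appear in the list, and in the third case, if the data set is moreover $\Sa$-symmetric, Theorem \ref{MAINTHEOREM} upgrades ``of M/KN-type'' to ``is a M/KN black hole''. So the substance is Theorem \ref{MAINTHEOREM}, and in view of the characterisation recalled above my plan is to show that an $\Sa$-symmetric (hence axisymmetric) static black hole $(\Sigma;g,N)$ of M/KN-type admits, on its universal cover $\widetilde\Sigma$, global Weyl coordinates $(\rho,z)\in\mathbb{R}^{+}_{0}\times\mathbb{R}$ with $U=\ln N=\sum_{i=1}^{h}U^{\rm MKN}_{m_{i},L}(\rho,z+z_{i})$. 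I would begin with the topology: $\Sigma$ being a solid torus with finitely many balls removed, $\pi_{1}(\Sigma)\cong\mathbb{Z}$ (generated by the core circle), so $\widetilde\Sigma$ is the infinite cyclic cover, which ``unrolls'' $\Sigma$ into a full cylinder carrying a $\mathbb{Z}$-periodic array of $2$-sphere horizons, the axis becoming a single line; the $\Sa$-orbits are null-homotopic, so the action lifts to $\widetilde\Sigma$ and the deck group $\mathbb{Z}$ commutes with it.

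\emph{Step: global Weyl coordinates.} Next I would build the Weyl coordinates on $\widetilde\Sigma$. Let $\xi$ be the axisymmetric Killing field and set $\rho:=N\,|\xi|_{g}$. Using (\ref{FEE}) and axisymmetry, the classical computation shows that on the quotient surface $Q:=\widetilde\Sigma/\Sa$ — whose boundary is made of the axis segments together with the arcs that are the horizon quotients — the function $\rho$ is harmonic for the quotient metric, positive in the interior and zero on the boundary; since $\widetilde\Sigma$ is simply connected, the regular part of $Q$ (where $\rho>0$) is simply connected, so $\rho$ has there a single-valued harmonic conjugate $z$. I would then prove that $\rho$ has no interior critical point and that $(\rho,z)\colon Q\to\{\rho\ge0\}$ is a global diffeomorphism onto the full closed half-plane: unboundedness in the $\rho$-direction comes from $\rho=N|\xi|\to\infty$ at the Kasner end, while the deck transformation, being an orientation-preserving fixed-point-free isometry preserving $\rho$, must act by $(\rho,z,\varphi)\mapsto(\rho,z+L,\varphi+\alpha)$ for some $L>0$, $\alpha\in[0,2\pi)$, forcing $z$ to sweep all of $\mathbb{R}$. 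In these coordinates $g$ takes Weyl form, $U=\ln N$ is $L$-periodic in $z$, solves $U_{\rho\rho}+\rho^{-1}U_{\rho}+U_{zz}=0$, is smooth for $\rho>0$, is smooth along the genuine-axis segments of $\rho=0$ (smoothness of $(\Sigma,g,N)$ and absence of conical singularities), has along each horizon precisely the logarithmic singularity of a Schwarzschild rod of half-length $m_{i}$, and satisfies $U\sim(\alpha/2)\ln\rho$ — hence sub-exponentially — at the Kasner end.

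\emph{Step: matching the M/KN form.} Writing the rods of one period as $[a_{i},b_{i}]$, $i=1,\dots,h$, with centres $c_{i}$ and half-lengths $m_{i}$, I would set $W:=\sum_{i=1}^{h}U^{\rm MKN}_{m_{i},L}(\rho,z-c_{i})$, which (from the M/KN construction) is $L$-periodic in $z$, solves the same linear equation, has exactly the same logarithmic singularities on the periodic rod array, is smooth elsewhere on $\rho=0$, and grows at most like $(2L^{-1}\sum_{i}m_{i})\ln\rho$. Then $V:=U-W$ solves $V_{\rho\rho}+\rho^{-1}V_{\rho}+V_{zz}=0$, is $L$-periodic in $z$, extends smoothly across the \emph{entire} axis (the rod singularities cancel), and is sub-exponential. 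Regarding $V$ as an axisymmetric harmonic function on $\mathbb{R}^{3}$ and expanding $V=\sum_{k}v_{k}(\rho)e^{2\pi i kz/L}$, each $v_{k}$ solves a modified Bessel equation; smoothness at $\rho=0$ excludes the singular solution, and for $k\ne0$ the sub-exponential bound excludes the exponentially growing solution, so $v_{k}\equiv0$, while $k=0$ plus smoothness forces $v_{0}$ constant. Hence $U=W+\text{const}$, i.e. $U$ is of M/KN form (the constant is the harmless rescaling $N\mapsto e^{\text{const}}N$; it also yields $\alpha=4L^{-1}\sum_{i}m_{i}$). Recovering $\phi$ from $U$ by the quadrature (\ref{WEYLCOOR20}), with the remaining constant fixed by the axis/no-strut conditions, identifies $(\widetilde\Sigma;g,N)$ as a universal M/KN solution and $(\Sigma;g,N)$ as its quotient by $(\rho,z,\varphi)\mapsto(\rho,z+hL,\varphi+\alpha)$ — a Myers/Korotkin-Nicolai black hole.

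\emph{Main obstacle.} The Fourier/Liouville argument is routine once its hypotheses are in place; the hard part is the construction of the global Weyl coordinates. Showing that $\rho=N|\xi|$ has no interior critical point and that $(\rho,z)$ is a global diffeomorphism onto the full half-plane will require maximum-principle and degree-type arguments together with precise control of the Kasner end (including its uniformity in $z$), and — most delicately — a careful local analysis at each horizon rod and especially at the rod endpoints, i.e. the poles of the horizon spheres where the axis meets the horizon and $N$ and $|\xi|$ vanish simultaneously; it is there that one must show $V=U-W$ stays bounded, so that it extends smoothly across the whole axis. I expect the bulk of the work to lie in this asymptotic-and-boundary analysis.
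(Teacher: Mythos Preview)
Your reduction of Theorem~\ref{CLASTHEO} to Theorem~\ref{MAINTHEOREM} via \cite{PartI},\cite{PartII} is exactly the paper's, and your strategy for Theorem~\ref{MAINTHEOREM} --- construct global Weyl coordinates, then show $U$ differs from the M/KN potential by a constant --- is also the paper's. The tactical differences are still worth recording. You work on the universal cover $\widetilde\Sigma$ and read off $z$-periodicity from the deck action; the paper works instead on the quotient surface $S=\Sigma/\Sa$ and proves directly that $S\cong\Sa\times\mathbb{R}^{+}_{0}$ (Theorems~\ref{NUMBCONCOM} and \ref{TTO}), so that periodicity is automatic --- the two framings are equivalent. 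For the $\rho$-coordinate you lean on the assumed Kasner end to get $\rho=N|\xi|\to\infty$, whereas the paper derives $\Lambda\to\infty$ intrinsically from the positivity of the horizon flux together with curvature-decay and area-growth arguments (Proposition~\ref{BOB} and the proposition after it); your shortcut is legitimate given the hypotheses, the paper's route is more self-contained. For the final uniqueness step the two approaches genuinely diverge: you run a Fourier/Bessel Liouville argument on $V=U-W$ (boundedness at $\rho=0$ kills the $K_{0}$ modes, sub-exponential growth kills the $I_{0}$ modes, and the zero mode is then forced constant), while the paper uses the energy identity $\int\rho(w_{z}^{2}+w_{\rho}^{2})\,dz\,d\rho=\text{boundary terms}$ and drives the boundary terms to zero --- which in particular requires the separate flux computation $u_{0}=\tilde u_{0}=(L_{1}+\cdots+L_{h})/L$ that your argument recovers automatically once $V$ is bounded at $\rho=0$. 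The paper does use exactly your Bessel expansion, but only to establish the Kasner asymptotics of $U$ (Proposition~\ref{KASSIMP}), not for the full uniqueness. Both routes rest on the same delicate ingredient you correctly flag as the crux: boundedness of $U-W$ at the horizon poles, where $N$ and $|\xi|$ vanish simultaneously; the paper settles this via the coordinate change $\rho=uv$, $z=(v^{2}-u^{2})/2$, showing $w$ is smooth in $(u,v)$ near each pole.
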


To end the introduction let us say a few words on the steps needed to prove Theorem \ref{MAINTHEOREM}. First we prove that the quotient of $\Sigma$ by $\Sa$ is diffeomorphic to $\Sa\times \mathbb{R}^{+}_{0}$. This is achieved mostly by geometric and topological arguments plus the fact, proved in \cite{PartI}, that static black holes have always only one end (that imposes important topological constraints). Then we prove the existence of Weyl's global coordinates $(\rho,z)$. It is shown that the coordinate $\rho$ ranges in $\mathbb{R}^{+}_{0}$ and the periodic coordinate $z$ ranges say in $[0,L)$, interval that we identify with the circle $\Sa_{L}$ of perimeter $L$. So we have the projection,
\be
\pi:\Sigma\rightarrow \mathbb{R}^{+}_{0}\times \Sa_{L},
\ee
given by $p\rightarrow (\rho(p),z(p))$. In this context, the horizon components $\mathcal{H}_{1},\ldots,\mathcal{H}_{h}$ (that make $\partial \Sigma$) project into $h$ disjoint segments $\pi(\mathcal{H}_{1}),\ldots,\pi(\mathcal{H}_{h})$ on $\{0\}\times \Sa_{L}$ of lengths $L_{1},\ldots,L_{h}$ respectively. We identify now the universal cover of $\mathbb{R}^{+}_{0}\times \Sa_{L}$ to $\mathbb{R}^{+}_{0}\times \mathbb{R}$, and we let $\Pi$ be the projection. Then, for each $i$, $\Pi^{-1}(\pi(\mathcal{H}_{i}))$ is an infinite union of disjoint intervals of length $L_{i}$ on $\{0\}\times \mathbb{R}$, where the distance between the centers of two such consecutive intervals is $L$. Let $U_{\pi(\mathcal{H}_{i})}(\rho,z):=U^{\rm MKN}_{L_{i}/2,L}(\rho,z+z_{i})$ be the potential (\ref{UMKN}) as in the M/KN construction, where $m$ is now replaced by $L_{i}/2$ and $z_{i}$ is the $z$-coordinate of the center of any of the intervals in $\Pi^{-1}(\pi(\mathcal{H}_{i}))$. Let finally,
\be
U_{\pi(\mathcal{H}_{1}),\ldots,\pi(\mathcal{H}_{h})}:=\sum_{i=1}^{h}U_{\pi(\mathcal{H}_{i})}.
\ee
We will prove that $U\circ \pi^{-1}\circ \Pi^{-1}$ and $U_{\pi(\mathcal{H}_{1}),\ldots,\pi(\mathcal{H}_{h})}$ differ, if anything, by a constant. This proves that, on the universal cover of the static data set, the potential $U$ is indeed the potential of a M/KN solution. As $\phi$ is found from $U$ uniquely, we deduce that the black hole is a M/KN black hole, finishing thus the proof of Theorem \ref{MAINTHEOREM}.

\section*{Proof of Theorem \ref{MAINTHEOREM}}

The positive reals are denoted by $\mathbb{R}^{+}$, we also use $\mathbb{R}^{+}_{0}=\mathbb{R}^{+}\cup \{0\}$. The closed half-plane is denoted by $\mathbb{R}^{2+}_{0}=\mathbb{R}\times \mathbb{R}^{+}_{0}$. 

We will use a couple of times below that the number of fixed points of any Killing field on a compact orientable Riemannian surface generating a $\Sa$-isometric action is equal to the Euler-characteristic of the surface (use that the zeros of $\xi$ have always index one). So, isometric $\Sa$-actions on tori do not have fixed points, and on spheres have two (the poles). 

\begin{Theorem}\label{NUMBCONCOM} Let $(\Sigma;g,N)$ be a $\Sa$-symmetric static black hole data set of M/KN type. Then the axis is a compact one-manifold.
\end{Theorem}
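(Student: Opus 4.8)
The plan is to prove that the axis $\An$ --- the fixed--point set of the isometric $\Sa$--action on $(\Sigma;g)$ --- is contained in a compact subset of $\Sigma$; since $\An$ is closed in $\Sigma$, this immediately makes it a compact embedded one--manifold, hence a finite disjoint union of circles and arcs. That $\An$ is a $1$--submanifold at all I would recall from the slice theorem: near any fixed point the action is a rotation of a transverse disc, so each component of $\An$ has even codimension, and codimension $0$ is excluded because $\An$ is closed, $\Sigma$ connected and the action non--trivial; thus every component has codimension two. Its components are therefore circles in $\Sigma^{\circ}$ or arcs meeting $\partial\Sigma$ exactly at their endpoints, and each is automatically a geodesic segment. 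So only compactness --- the boundedness of $\An$ away from the end --- is at stake.

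The key fact to establish is that near the end the $\Sa$--action is fixed--point free. First I would use that, the data set being of M/KN type, it is asymptotically Kasner: along a Kasner end $N\to+\infty$ (the Kasner lapse grows like $\rho^{\alpha/2}$ with $\alpha\in(0,2)$), and since by \cite{PartI} a static black hole has exactly one end, the lapse $N\colon\Sigma\to[0,\infty)$ is proper, so every sublevel set $\{N\le c\}$ is compact. Moreover, by the Kasner form of $g$ one can fix $c_{0}$ so large that $\{N\ge c_{0}\}$ lies in the end, $dN$ does not vanish there, and every level set $\{N=c\}$ with $c\ge c_{0}$ is a smooth embedded torus --- a cross--section of the end, which by the assumed topology of $\Sigma$ is diffeomorphic to $\St\times\mathbb{R}^{+}_{0}$.

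Now comes the decisive step. Since $N$ is $\Sa$--invariant, each such torus $\{N=c\}$, $c\ge c_{0}$, is $\Sa$--invariant, and the generator of the action, being tangent to it, restricts to a Killing field of the induced metric generating the induced $\Sa$--action on the torus. This induced action cannot be trivial, for otherwise the two--dimensional torus $\{N=c\}$ would lie inside $\An$, contradicting $\dim\An=1$. Hence, by the fact recalled just before the statement of the theorem, its number of fixed points equals $\chi(\St)=0$, i.e. $\An\cap\{N=c\}=\emptyset$ for every $c\ge c_{0}$. Therefore $\An\cap\{N\ge c_{0}\}=\emptyset$, so $\An$ is a closed subset of the compact set $\{N\le c_{0}\}$ and is compact.

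The only genuinely non--formal ingredient is the first part of the second paragraph: extracting from ``asymptotically Kasner'' both the properness of $N$ and the exhaustion of the end by $\Sa$--invariant tori on which $dN\neq0$; this is precisely what the Kasner asymptotics of \cite{PartI} and \cite{PartII} provide, and I expect it to be the main obstacle. If one preferred a self--contained route, one could instead replace $N$ by any proper $\Sa$--invariant smooth exhaustion function, use that $\Sigma$ has a single end with torus cross--section so that its large regular level sets are tori, and then repeat the decisive step verbatim. Everything else --- the slice--theorem normal form and the index/Euler--characteristic count of the zeros of a Killing field on a surface --- has already been quoted above.
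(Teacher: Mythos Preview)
Your argument is correct and is essentially the paper's own proof, just phrased as a direct exhaustion rather than by contradiction: both use that the Kasner asymptotic makes the far level sets of $N$ into $\Sa$--invariant two--tori, on which a non--trivial isometric $\Sa$--action has $\chi(\St)=0$ fixed points. You have in fact supplied extra detail the paper leaves implicit (the slice--theorem dimension count and the reason the induced action on each torus is non--trivial).
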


\begin{proof} If the axis is not compact then there is a divergent sequence $p_{i}(\in \Sigma)$ of fixed points of the $\Sa$-action. Suppose such is the case. As the static data is of M/KN type the asymptotic  is Kasner. Hence, for $i$ sufficiently large, the level set $\{p\in \Sigma: N(p)=N(p_{i})\}$ of $N$ is a $\Sa$-invariant two-torus (level sets are invariants). But non-trivial isometric $\Sa$-actions on two-tori do not have fixed points. We reach thus a contradiction. Hence the axis is compact.
\end{proof}

\begin{Theorem}\label{TTO} Let $(\Sigma;g,N)$ be a $\Sa$-symmetric static black hole data set of M/KN type. Then the quotient manifold $\qM$ of $\Sigma$ by $\Sa$ is homeomorphic to $\Sa\times \mathbb{R}^{+}_{0}$.
\end{Theorem}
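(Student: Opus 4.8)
The plan is to study the orbit map $\pi:\Sigma\to\qM:=\Sigma/\Sa$ and to show that $\qM$ is an orientable surface with boundary which is non-compact with exactly one end, that end being cylindrical, which has compact boundary and which satisfies $\pi_{1}(\qM)\cong\mathbb{Z}$; the classification of surfaces then forces $\qM\cong\Sa\times\mathbb{R}^{+}_{0}$.

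First I would read off the local structure of the action. By the slice theorem, $\Sigma$ looks near a principal orbit like $\Sa\times D^{2}$, near an exceptional orbit like a finite quotient of it, and near a point of the axis like $\mathbb{R}\times D^{2}$ with $\Sa$ rotating the disk; on each horizon component, which is a two-sphere, the induced $\Sa$-action is a rotation with the two poles fixed (a non-trivial $\Sa$-action on $S^{2}$ can be nothing else, and it cannot be trivial since the fixed-point set of a non-trivial isometric $\Sa$-action on $\Sigma$ is at most one-dimensional, its components having even codimension). In every case the local orbit space is a half-plane or a plane, so $\qM$ is a topological $2$-manifold, with the images of the axis and of $\partial\Sigma$ forming $\partial\qM$; and $\qM$ is orientable because $\Sigma$ is and $\Sa$, being connected, acts by orientation-preserving diffeomorphisms. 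Since $\Sa$ acts by isometries and $\Sigma$ is metrically complete the map $\pi$ is proper, hence it induces a bijection between the ends of $\Sigma$ and those of $\qM$; as $\Sigma$ has a single end by \cite{PartI}, so does $\qM$. Finally, $\partial\Sigma$ is compact by hypothesis and the axis is compact by Theorem \ref{NUMBCONCOM}, so $\partial\qM=\pi(\partial\Sigma\cup\mathrm{axis})$ is a compact $1$-manifold, i.e. a finite disjoint union of circles.

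Next I would pin down the end. Being of M/KN type the data is asymptotically Kasner, so a neighbourhood of the end of $\Sigma$ is diffeomorphic to $\St\times[1,\infty)$ with $\Sa$ acting on the $\St$-factor only and without fixed points (no $\Sa$-action on a torus has fixed points); consequently the end of $\qM$ is $(\St/\Sa)\times[1,\infty)\cong\Sa\times[1,\infty)$, a cylindrical end. On the other hand the M/KN-type hypothesis says that $\Sigma$ is a solid torus minus finitely many open balls, so $\pi_{1}(\Sigma)\cong\mathbb{Z}$ (excising balls from a connected $3$-manifold does not change $\pi_{1}$). I would then prove that $\pi$ induces an isomorphism $\pi_{*}:\pi_{1}(\Sigma)\xrightarrow{\ \sim\ }\pi_{1}(\qM)$: surjectivity is the standard path-lifting argument for orbit maps (lift a loop downstairs to a path, then close it up inside the single orbit containing its endpoints), and injectivity follows once a principal orbit is known to be null-homotopic in $\Sigma$ --- which it is, because it can be pushed near the axis where it becomes a small loop bounding a disk, so that the ``fibre ambiguity'' between two lifts of one loop is trivial in $\pi_{1}(\Sigma)$. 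Here it is essential that the axis be non-empty (the solution is axisymmetric) and that $\pi_{1}(\Sigma)$ be torsion-free, which also disposes of any exceptional orbits. Hence $\pi_{1}(\qM)\cong\mathbb{Z}$.

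To conclude: $\qM$ is an orientable $2$-manifold, non-compact with a single cylindrical end, with compact boundary and with $\pi_{1}(\qM)\cong\mathbb{Z}$. Writing $\qM$ as a compact core $K$ with a half-open cylinder $\Sa\times[0,\infty)$ attached along one circle of $\partial K$, the core $K$ is a compact orientable surface with non-empty boundary and $\pi_{1}(K)\cong\pi_{1}(\qM)\cong\mathbb{Z}$, hence an annulus $\Sa\times[0,1]$; re-attaching the cylinder to one of its two boundary circles yields $\qM\cong\Sa\times[0,1)=\Sa\times\mathbb{R}^{+}_{0}$, as claimed. I expect the main obstacle to be the paragraph computing $\pi_{1}(\qM)$ --- precisely controlling the effect of the $\Sa$-action on the fundamental group by trivialising the generic orbit against the axis --- together with making the ``single cylindrical end'' statement rigorous; the remaining steps are routine applications of the slice theorem, properness of $\pi$, and the classification of surfaces.
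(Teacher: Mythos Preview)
Your proposal is correct and follows a genuinely different route from the paper's proof.

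The paper argues geometrically. It first fills the excised balls back in to work on the full open solid torus $\hat\Sigma$, then shows separately that (i) $\partial\hat S$ is connected --- a segment joining two hypothetical boundary components would lift to an embedded $2$-sphere meeting an axis circle transversally in a single point, impossible since every sphere in a solid torus bounds a ball --- and (ii) $S$ has genus zero --- a non-separating loop in $S$ lifts to a non-separating torus in $\Sigma$, and cutting along it and doubling produces a static black hole data set with two ends, contradicting the one-end theorem of \cite{PartI}. Your route is homotopy-theoretic: you compute $\pi_1(S)\cong\mathbb{Z}$ directly via the orbit map and then read off both ``connected boundary'' and ``genus zero'' at once from the surface classification. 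This is more streamlined and, notably, avoids any essential appeal to \cite{PartI}: your citation of it for ``$\Sigma$ has one end'' is superfluous, since an open solid torus minus finitely many open balls already has exactly one end by elementary topology.

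Two small remarks on your write-up. First, the injectivity of $\pi_*$ can be bypassed entirely: since $S$ is a surface with non-empty boundary, $\pi_1(S)$ is free; being a quotient of $\mathbb{Z}$ (surjectivity) and non-trivial (else the compact core would be a disk, forcing $\partial S=\emptyset$), it must equal $\mathbb{Z}$. This spares you the homotopy-lifting details. Second, your claim that torsion-freeness of $\pi_1(\Sigma)$ ``disposes of any exceptional orbits'' is not quite right --- it only forces an exceptional orbit, if present, to be null-homotopic as well --- but this is harmless for the argument.
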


\begin{proof} Recall that $\Sigma$ is diffeomorphic to a an open solid torus minus $h>0$ open three-balls. So $\Sigma\sim \hat{\Sigma}\setminus (B_{1}\cup\ldots\cup B_{h})$, where $\hat{\Sigma}$ is an open solid torus and the $B_{i}$'s are open three-balls. Extend $g$ to a $\Sa$-symmetric metric $\hat{g}$ on the whole solid torus $\hat{\Sigma}$, that is, extend $g$ to the balls $B_{i}'s$. Thus $(\hat{\Sigma};\hat{g})$ is now a $\Sa$-symmetric solid torus. Let $\hat{\qM}$ be the quotient of $\hat{\Sigma}$ by $\Sa$, as represented in Figure (\ref{Fig2}). Let $\pi:\hat{\Sigma}\rightarrow \hat{S}$ be the projection. Observe that the axes of $\hat{\Sigma}$ project diffeomorphically onto $\partial \hat{\qM}$. We note too that $\hat{\qM}$ has only one end and is diffeomorphic to $\Sa\times \mathbb{R}^{+}$, namely, outside a compact set $\hat{\qM}$ is diffeomorphic to $\Sa\times \mathbb{R}^{+}$. This is due to the fact that as $(\Sigma;g,N)$ is AK, then the end of $\Sigma$ is foliated by $\Sa$-invariant two-tori (the level sets of $N$), and the $\Sa$-quotient of each invariant torus is diffeomorphic to $\Sa$.

By Theorem \ref{NUMBCONCOM}, $\partial \hat{\qM}$ has a finite number of connected components each one homeomorphic to a circle. If $\partial \hat{\qM}$ has more than one connected component then one can join a point in one component to a point in another component by a connected one-manifold (a closed `segment', see Figure \ref{Fig2}). When lifting that `segment' to $\hat{\Sigma}$ it gives an embedded two-sphere having intersection number one with each of the two connected components of the axis $\pi^{-1}(\partial \hat{\qM})$, where the end points of the segment belonged to. As the intersection number is not zero, such sphere cannot be contractible in $\hat{\Sigma}$. But spheres embedded inside solid tori are always contractible. Hence $\partial \hat{\qM}$ has only one connected component, and so does $\partial \qM$.

On the other hand if the genus of $\qM$ is not zero then there is a closed non-contractible embedded loop such that, if removed from $\qM$, the resulting manifold is still connected (see Figure \ref{Fig2}). When such a loop is lifted to $\Sigma$ it gives an embedded two-torus, say $T$. If such torus is removed (closed) from $\Sigma$ we get still a connected manifold but with two new boundary components. When two copies of such manifold are properly glued together (along the new boundary components) we obtain a double cover $\tilde{\Sigma}$ of $\Sigma$.  When $g$ and $N$ are lifted to $\tilde{\Sigma}$, we obtain a static black hole data set $(\tilde{\Sigma};\tilde{g},\tilde{N})$ with two ends which is not possible as proved in \cite{PartI}.  Hence $\qM$ has genus zero.

Combining the two results above we obtain that $\qM$ is homeomorphic to $\Sa\times \mathbb{R}^{+}_{0}$.

\begin{figure}
\centering
\includegraphics[scale=.45]{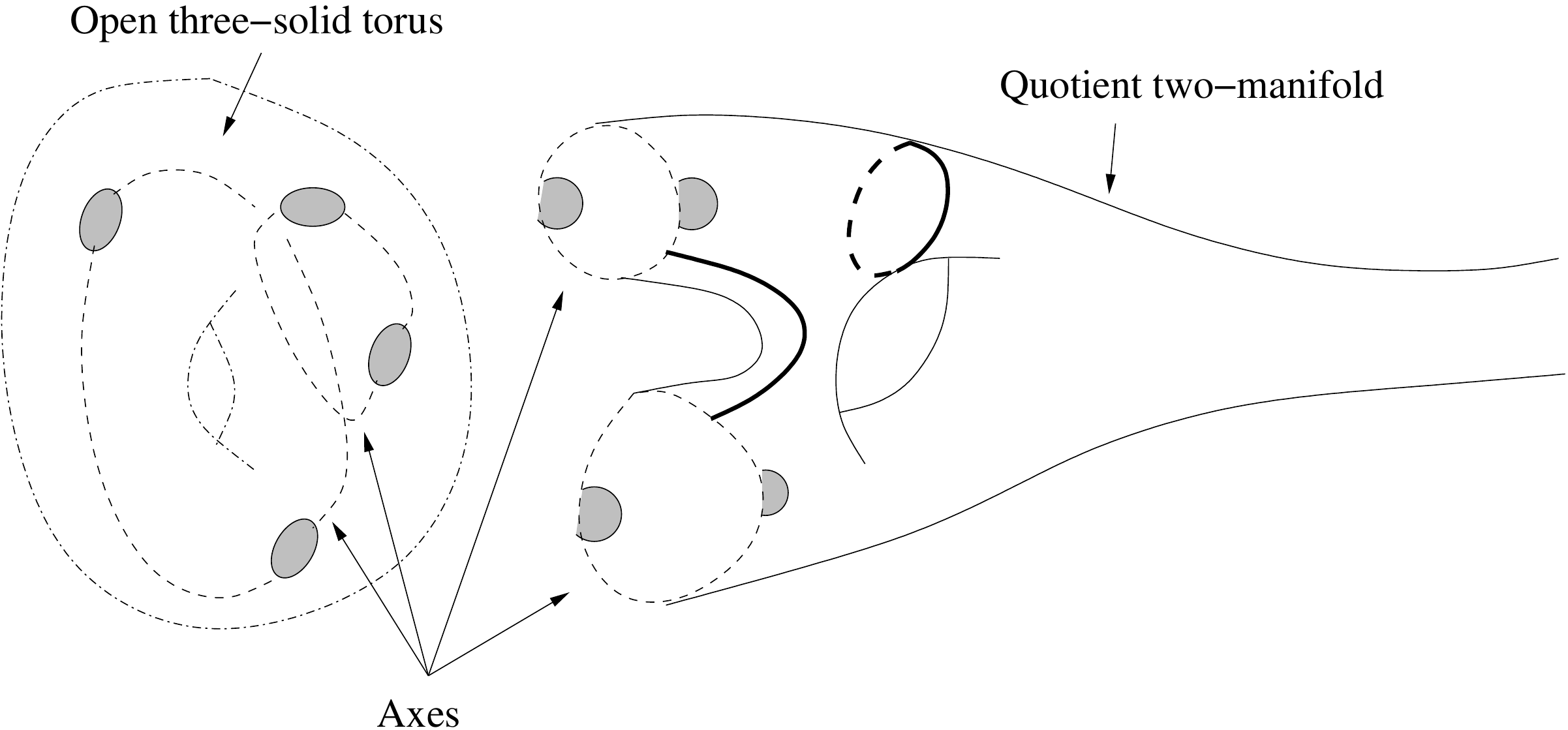}
\caption{Sketch of the construction inside the proof of Theorem \ref{TTO}.}
\label{Fig2}
\end{figure}

\end{proof}

We will make now a few comments on the setup in the quotient space. We will work with it several times later.

Let $\qM\sim \Sa\times \mathbb{R}^{+}_{0}$ be the quotient manifold of a static data set of M/KN type, and let $\Sigma\stackrel{\pi}{\longrightarrow}\qM$ be the projection. The horizons $\mathcal{H}_{1},\ldots,\mathcal{H}_{h}$ project into a set of $h$ disjoint closed `segments' in $\partial \qM$. The axes project diffemorphically into the closure of the complement  in $\partial \qM$ of the projected horizons. The projection $\pi$ is a $\Sa$-principal fibre bundle when restricted to $\Sigma$ minus the axis and the horizons. Furthermore, the connection given by the distribution of two-planes perpendicular to the axisymmetric Killing field is flat (because the distribution is integrable, see \cite{MR757180}), hence there are local charts $W_{i}\subset S\setminus \partial S$ and trivialisations $\chi_{i}: \pi^{-1}(W_{i})\rightarrow W_{i}\times \Sa$ with $\chi(x)=(\pi(x),\varphi_{i}(x))$, having constant transition functions, i.e.  if $\pi(x)\in W_{i}\cap W_{j}$ then $\varphi_{i}(x)=R_{ij}\circ \varphi_{j}(x)$ with $R_{ij}$ a constant rotation on $\Sa$. In other words, the rotational angle $\varphi$ is defined locally up to a constant, but not necessarily globally.        

It is usually convenient to express the metric $g$ as,
\be
g=e^{-2U}(q+\Lambda^{2}d\varphi^{2}),
\ee
where $q$ is the quotient two-metric on $S$, $\varphi$ is the rotational angle (as said, well defined locally up to a constant), $\Lambda=|\partial_{\varphi}|_{g}$ is the $g$-norm of the axisymmetric Killing field $\partial_{\varphi}$. The metric $q$ and the function $U$ are singular on $\partial \qM$, whereas $\Lambda$ is zero there. The static Einstein equations in these variables $q$, $U$ and $\Lambda$, are equivalent to the system,
\begin{align}
\label{EQ1} & \kappa=|\nabla U|^{2},\\
\label{EQ2} & \Delta U + \langle \frac{\nabla \Lambda}{\Lambda},\nabla U\rangle=0,\\
\label{EQ3} & \Delta \Lambda=0,
\end{align}
where $\kappa$ is the Gaussian curvature of $q$. Thus, $\hqg$ has non-negative curvature. Furthermore, due to Anderson's estimate (\cite{MR1806984}, see also another proof in this context in \cite{PartII}),
\be\label{ANDEREST}
|\nabla U|^{2}(p)\leq \frac{c}{\dist^{2}(p,\partial \qM)},
\ee
the curvature $\kappa$ decays to zero at infinity at least quadratically in $d(p,\partial \qM$), (here $\dist$ is the distance function to $\partial \qM$ with respect to $q$). This decay estimate plus the non-negativity of $\kappa$ will be a relevant information when studying the asymptotic. Another fundamental estimate that we will use to study the asymptotic is,
\be\label{GVBOUND}
|\nabla V|^{2}(p)\leq \frac{c}{\dist^{2}(p,\partial \qM)},
\ee
where $V=\ln \Lambda$ (see \cite{PartII}).

The equation (\ref{EQ2}) is nothing else than the projection of the lapse equation $\Delta N=0$ and it can be written in the form ${\rm div}(\Lambda \nabla U)=0$. Hence, by Gauss's theorem, the integral,
\be\label{INTEGRALL0}
\int_{\ell}\Lambda \nabla_{n}Udl,
\ee
over smooth embedded loops $\ell\subset S\setminus \partial S$ isotopic to $\partial \qM$, do not depend on $\ell$ (here $dl$ is the element of length on $\ell$ and $n$ the `outward' unit normal to $\ell$, i.e. pointing inwards to the unbounded component of $\qM\setminus \ell$). This integral is seen easily to be equal the flux of $\nabla N$ on the lift of $\ell$ to $\Sigma$, namely,
\be\label{SURFINT}
\int_{\ell}\Lambda \nabla_{n}Udl=\int_{\pi^{-1}(\ell)}\nabla_{\rm n}NdA,
\ee
where here ${\rm n}$ is the $g$-normal to $\pi^{-1}(\ell)$ and $dA$ is the $g$-element of area. As the surfaces $\pi^{-1}(\ell)$ are connected and enclose the horizons, the surface integrals (\ref{SURFINT}), for any $\ell$, are equal to the flux on the horizons,
\be
\int_{\mathcal{H}}\nabla_{n}N dA.
\ee 
Thus, we have,
\be\label{INTEGRALL0}
\int_{\ell}\Lambda \nabla_{n}Udl=\sum_{i=1}^{i=h}K_{i}A_{i}>0,
\ee
where $K_{i}>0$ and $A_{i}>0$ are the temperature and the area of each horizon respectively (recall that $|\nabla N|=\nabla_{n}N$ is constant over each $\mathcal{H}_{i}$ and is called the temperature of the horizon).
 
\begin{Proposition}\label{BOB} Let $(\qM; \hqg,U,V)$ be the quotient of a static data set of M/KN type. Then $\Lambda$ cannot be uniformly bounded above.
\end{Proposition}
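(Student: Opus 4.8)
The plan is to argue by contradiction. Suppose that $\Lambda\leq C$ on all of $\qM$ for some constant $C>0$; the plan is to derive a contradiction by playing the strict positivity of the lapse flux against the non-negativity of the Gauss curvature $\gcur=|\nabla U|^{2}$ of $\hqg$. Recall from the computation leading to (\ref{INTEGRALL0}) that for every smooth embedded loop $\ell\subset\qM\setminus\partial\qM$ isotopic to $\partial\qM$,
\[
\int_{\ell}\Lambda\,\nabla_{n}U\,dl=\sum_{i=1}^{h}K_{i}A_{i}=:m_{0}>0 ,
\]
$n$ being the unit normal of $\ell$ pointing into the unbounded component of $\qM\setminus\ell$. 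Put $\tau:=\dist(\cdot,\partial\qM)$ and, for a regular value $t>0$, $\Gamma_{t}:=\{\tau=t\}$, $L(t):=\mathrm{length}_{\hqg}(\Gamma_{t})$. By Theorem \ref{TTO}, $\qM\cong\Sa\times\mathbb{R}^{+}_{0}$ has a single end, so there is $t_{0}>0$ such that for every regular $t\geq t_{0}$ the set $\Gamma_{t}$ is an embedded loop isotopic to $\partial\qM$ bounding $\{\tau>t\}$; for such $t$ the displayed identity holds with $\ell=\Gamma_{t}$ and $n=\nabla\tau$.

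The first step is to extract the asymptotic geometry of $(\qM,\hqg)$ from $\gcur\geq 0$. Applying Gauss--Bonnet on the compact annuli $\{t_{0}\leq\tau\leq t\}$, together with the first variation of arclength of the family $\{\Gamma_{t}\}$, yields for a.e.\ $t\geq t_{0}$ the inequality $L'(t)\leq a_{0}-\int_{\{t_{0}<\tau<t\}}\gcur\,dA$, with $a_{0}:=L'(t_{0})<\infty$. Since $L>0$ on $[t_{0},\infty)$, integrating this and using $\gcur\geq 0$ forces
\[
\int_{\{\tau>t_{0}\}}\gcur\,dA\leq a_{0}<\infty
\qquad\text{and}\qquad
L(t)\leq L(t_{0})+a_{0}(t-t_{0})\quad(t\geq t_{0}),
\]
i.e.\ $\hqg$ has finite total curvature on the end and its distance circles grow at most linearly. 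Write $\varepsilon(T):=\int_{\{\tau>T\}}\gcur\,dA$, so $\varepsilon(T)\to 0$ as $T\to\infty$. (If $a_{0}\leq 0$, the first estimate gives $\gcur\equiv 0$, hence $\nabla U\equiv 0$, on the end, so $m_{0}=0$, already a contradiction; so $a_{0}>0$.)

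The second step is the flux estimate. By the coarea formula (using $|\nabla\tau|=1$ a.e.) and the flux identity on each $\Gamma_{t}$, for every large $T$
\[
m_{0}\,T=\int_{\{T<\tau<2T\}}\Lambda\,\langle\nabla U,\nabla\tau\rangle\,dA
\leq C\!\!\int_{\{T<\tau<2T\}}\!\!|\nabla U|\,dA
\leq C\Big(\,\int_{\{T<\tau<2T\}}\!\!\gcur\,dA\Big)^{\!1/2}\big(\mathrm{Area}\{T<\tau<2T\}\big)^{\!1/2},
\]
using $\Lambda\leq C$, $\gcur=|\nabla U|^{2}$ and Cauchy--Schwarz. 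The first factor on the right is at most $\varepsilon(T)^{1/2}$, and $\mathrm{Area}\{T<\tau<2T\}=\int_{T}^{2T}L(t)\,dt\leq 3a_{0}T^{2}$ for $T$ large by the linear bound on $L$. Dividing by $T$ gives $m_{0}\leq C\,(3a_{0})^{1/2}\,\varepsilon(T)^{1/2}\to 0$, contradicting $m_{0}>0$. Hence $\Lambda$ cannot be uniformly bounded above.

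The genuine difficulties are technical rather than conceptual. The metric $\hqg$ and the potential $U$ are singular along $\partial\qM$, and $\tau$ is only Lipschitz (smooth off the cut locus of $\partial\qM$), so the Gauss--Bonnet and coarea manipulations must be carried out on the smooth end $\{\tau\geq t_{0}\}$, with Sard's theorem supplying a full-measure set of regular values $t$ and the cut-locus corner terms entering with the right (non-positive) sign; properness of $\tau$ — i.e.\ completeness of $(\qM,\hqg)$, implicit already in the statement of (\ref{ANDEREST}) — is what makes $\{\tau\leq t\}$ compact. I would note that Anderson's estimate (\ref{ANDEREST}) itself plays no role here: the only ingredients are the positivity of the lapse flux and the sign of $\gcur$. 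A conformal-geometry alternative is available too — the end of $(\qM,\hqg)$ being parabolic, a bounded harmonic function vanishing on the compact curve $\partial\qM$ would have to vanish identically, contradicting $\Lambda>0$ — but the flux argument above is more robust and does not use the precise Kasner asymptotic.
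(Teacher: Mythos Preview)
Your argument is correct and follows a genuinely different route from the paper's. The paper argues by a dichotomy on the area growth $\mu=\lim_{r\to\infty} A(B(\ell_{0},r))/r^{2}$ of $(\qM,\hqg)$: when $\mu=0$ the rescaled annuli collapse and one finds loops of $\hqg$-length $o(r)$, after which Anderson's bound $|\nabla U|\leq c/r$ kills the flux; when $\mu>0$ the rescaled annuli converge smoothly to a flat model (using also (\ref{GVBOUND})) and one deduces $r|\nabla U|\to 0$, which again kills the flux. Your proof avoids both Anderson's estimate and the case split: from $\kappa=|\nabla U|^{2}\geq 0$ alone, the Riccati/Gauss--Bonnet inequality on distance circles yields simultaneously $\int_{\text{end}}\kappa\,dA<\infty$ and $L(t)\lesssim t$, and then a single coarea--Cauchy--Schwarz estimate forces the averaged flux to zero. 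This is more elementary and, as you observe, makes no use of (\ref{ANDEREST}) or (\ref{GVBOUND}); it also delivers the finite-total-curvature bound that the paper derives separately in the paragraph following the proposition. The paper's longer route is not wasted, however: the rescaling/convergence framework it sets up here is precisely what is reused in the two propositions that follow. One small remark: your assertion that $\Gamma_{t}$ is a \emph{single} loop for all large regular $t$ is not justified by the topology of $\qM$ alone, but it is also not needed---the divergence theorem gives $\int_{\Gamma_{t}}\Lambda\,\nabla_{n}U\,dl=m_{0}$ summed over all components of the level set, and the length/curvature inequalities only improve (since the Euler characteristic drops) when $\Gamma_{t}$ is disconnected.
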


\begin{proof} We already mentioned that the integrals,
\be\label{INTEGRALL}
\int_{\ell}\Lambda \nabla_{n}Ud l,
\ee
over loops $\ell\subset S\setminus \partial S$ isotopic to $\partial \qM$, do not depend on $\ell$, and that the constant they define is positive. We will prove that if $\Lambda$ is bounded above, that is $\Lambda\leq\overline{\Lambda}<\infty$, then (\ref{INTEGRALL}) is necessarily zero, reaching thus a contradiction. We will do so by evaluating the integral over a certain divergent sequence of loops $\ell_{i}$ and proving that it converges to zero (so is zero for all $i$). For that purpose we need to clarify the type of asymptotic that one can have. 

First we recall that the Gauss curvature is $\kappa=|\nabla U|^{2}$. Hence $\kappa$ is non-negative and by (\ref{ANDEREST})  has at least quadratic decay. Standard arguments then show that the asymptotic of $(\qM; \hqg)$ is distinguished by just the area growth. Let us recall this. First fix a loop $\ell_{0}$ isotopic to $\partial \qM$ and let $E$ be the unbounded component of $\qM\setminus \ell$. Let $B(\ell_{0},r)=\{p\in E:\dist (p,\ell_{0})< r\}$ be the `ball' in $E$ of `center' $\ell_{0}$ and radius $r>0$. Then by the Bishop-Gromov monotonicity, the quotient,
\be\label{RESCAREA}
\frac{A(B(\ell_{0},r))}{r^{2}},
\ee
is monotonically decreasing as $r$ increases. Let $\mu\geq 0$ be the limit. The asymptotic of the manifold $(\qM; \hqg)$ is now distinguished by the cases $\mu=0$ and $\mu>0$.

\begin{figure}
\centering
\includegraphics[scale=.45]{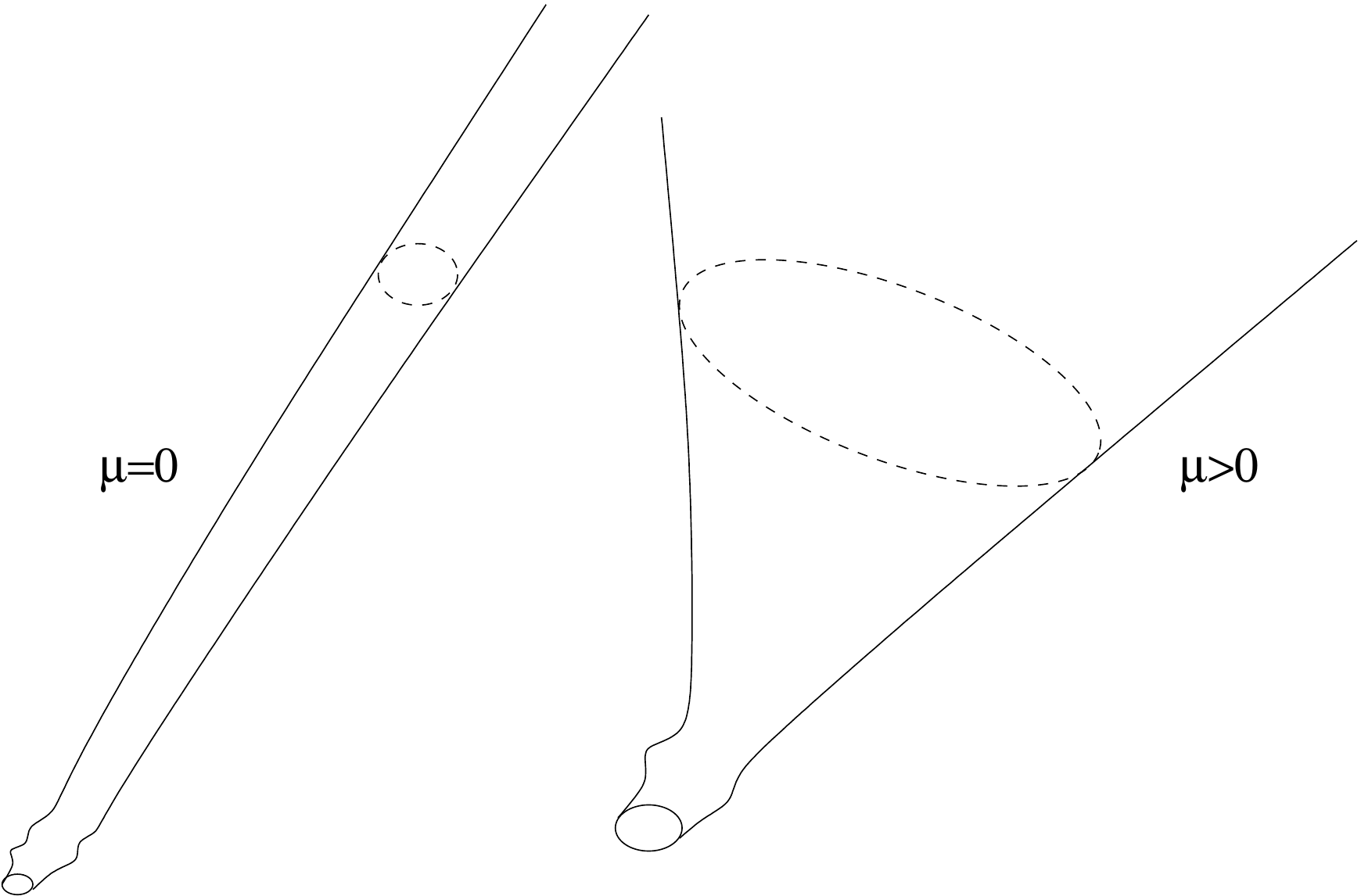}
\caption{Sketch of the asymptotic of $(S;q)$ depending on the area-growth.}
\label{Fig3}
\end{figure}

Let us assume $\mu=0$. Then $\lim_{r\rightarrow \infty} A(B(\ell_{0},r))/r^{2}=0$. For any $r>0$ we consider the annuli $\mathcal{A}(r/2,2r):=B(\ell_{0},2r)\setminus \overline{B(\ell_{0},r/2)}$. As $\mu=0$ the area of these annuli with respect to the scaled metric $\hqg_{r}:=\hqg/r^{2}$, namely
\be
A_{q_{r}}(\mathcal{A}(r/2,2r))=\frac{A(B(\ell_{0},2r))-A(B(\ell_{0},r/2))}{r^{2}}
\ee
tends to zero as $r\rightarrow \infty$. Furthermore by Anderson's estimate, the Gaussian curvature of $\hqg_{r}$ on $\mathcal{A}(r/2,2r)$, namely $\kappa_{q_{r}}=\kappa r^{2}$, is uniformly bounded, that is $\kappa_{\hqg_{r}}=r^{2}\kappa\leq c$, where $c$ independent on $r$. Thus, as $r\rightarrow \infty$ the scaled annuli $(\mathcal{A}(r/2,2r);q_{r})$ collapse in volume (area) with bounded curvature. Therefore their geometry looks like that of thin (finite) cylinders, whose `sections' tend to zero as $r\rightarrow \infty$. As a result one can chose a sequence of loops $\ell_{i}$ embedded in the annuli $\mathcal{A}(r_{i}/2,2r_{i})$ whose $\hqg_{r_{i}}$-length tends to zero, or, equivalently whose $\hqg$ length ${\rm length}(\ell_{i})$ divided by $r_{i}$ tends to zero as $r_{i}\rightarrow \infty$. Therefore,
\be
|\int_{\ell_{i}} \Lambda \nabla_{n}Ud l|\leq \overline{\Lambda}\frac{2c}{r_{i}}{\rm length}_{\hqg}(\ell_{i})\rightarrow 0
\ee
where we have used (\ref{ANDEREST}), ($|\nabla U|(p)\leq 2c/r_{i}$ if $p\in \mathcal{A}(2r_{i},r_{i}/2)$). Thus, the integral (\ref{INTEGRALL}) must be zero.

Let us assume now that $\mu>0$. In this case, the annuli $\mathcal{A}(2r,r/2)$ endowed with the scaled metric $\hqg_{r}$ converge in $C^{\infty}$ to the flat annulus,
\be\label{LIMANNULUS}
\mathcal{A}_{\mathbb{R}^{2}}(2,1/2)=B_{\mathbb{R}^{2}}(0,2)\setminus B_{\mathbb{R}^{2}}(0,1/2),\quad q_{\infty}=d\rho^{2}+\frac{\mu^{2}\rho^{2}}{\pi^{2}}d\phi^{2},
\ee
as $r\rightarrow \infty$, where $(\rho,\phi)$ are polar coordinates on $\mathbb{R}^{2}$ and $B_{\mathbb{R}^{2}}(0,R)$ is the Euclidean ball of center $0$ and radius $R$. (For the $C^{\infty}$ convergence use that the monotonicity of (\ref{RESCAREA}) to $\mu>0$ implies that $\kappa_{q_{r}}=|\nabla U|^{2}_{q_{r}}$ tends to zero and then use standard elliptic estimates on the elliptic system on $\Delta_{q_{r}}U+\langle \nabla V,\nabla U\rangle_{q_{r}}=0$, $\Delta_{q_{r}}V+\langle \nabla V,\nabla V\rangle_{q_{r}}=0$ together with the a priori bound (\ref{GVBOUND})). In particular, the Gaussian curvature $\kappa_{\hqg_{r}}=r_{i}^{2}|\nabla U|^{2}$ restricted over the annuli $\mathcal{A}(2r_{i},r_{i}/2)$ tends to zero as $r\rightarrow \infty$. Take now a sequence of loops $\ell_{i}$ over annuli $\mathcal{A}(2r_{i},r_{i}/2)$ and whose $\hqg_{r_{i}}$ length is less or equal than, say, $\beta>0$, or equivalently ${\rm length}(\ell_{i})/r_{i}\leq \beta$ over the annuli. We can now estimate as follows,
\be
|\int_{\ell_{i}}\Lambda \nabla_{n}Ud l|\leq \overline{\Lambda}(\max_{\ell_{i}}|\nabla U|)\beta r_{i}\rightarrow 0,
\ee
where to deduce the convergence to zero we have used that $\kappa_{q_{r_{i}}}=|\nabla U|^{2}r_{i}^{2}\rightarrow 0$. Thus, the integral (\ref{INTEGRALL}) must again be zero.
\end{proof}

In the paragraph below we use the discussion inside the previous proof to observe that the integral of $\kappa=|\nabla U|^{2}$ over the end of $\qM$ is finite. Namely $\int_{\Omega} |\nabla U|^{2}dA<\infty$, where $\Omega$ is the complement of any open and bounded set in $\qM$ containing $\partial \qM$. This estimate will be used when proving in Proposition \ref{KASSIMP} the Kasner form of $U$ and $V$ in Weyl coordinates.   

According to the discussion of the last Proposition \ref{BOB}, the asymptotic of the manifolds $(\qM; \hqg)$ depends on whether $\mu=0$ or $\mu>0$. Let $r_{i}=2^{i}$ and consider the annuli $\mathcal{A}(2r_{i},r_{i}/2)$ endowed again with the scaled metric $\hqg_{r_{i}}:=\hqg/r^{2}_{i}$. If $\mu=0$ then, for large $i$, the annuli $(\mathcal{A}(2r_{i},r_{i}/2);\hqg_{r_{i}})$ are thin cylinders collapsing (in the Gromov-Hausdorff metric) to a segment as $i\rightarrow \infty$, whereas if $\mu>0$ then they converge to a flat annulus (as $i\rightarrow \infty$). In the later case, we can find loops $\ell_{i}$ in $\mathcal{A}(2r_{i},r_{i}/2)$, dividing the annuli in two, and whose $\hqg_{r_{i}}$-length tends to $2\mu$ and whose mean curvature $\theta_{i}$ tends to one (the limit loop is $\rho=1$ in (\ref{LIMANNULUS})). By Gauss-Bonnet, if we let $\Omega(\ell_{i_{0}},\ell_{i})$ be the finite closed cylinder enclosed by $\ell_{i_{0}}$ and $\ell_{i}$ then,
\be\label{WALALA}
\int_{\Omega(\ell_{i_{0}},\ell_{i})} \kappa dA=\int_{\ell_{i}}\theta_{i}dl-\int_{\ell_{i_{0}}}\theta_{i_{0}}dl
\ee
Observe that the product $\theta_{i}dl$ is scale invariant, so we can compute the integral $\int_{\ell_{i}}\theta_{i}dl$, using $\theta_{i}$ and $dl$ with respect to $q_{r_{i}}$ that we know remain uniformly bounded as $i\rightarrow \infty$. Then, as $i\rightarrow \infty$, the right hand side remains bounded. Thus $\kappa$, hence also $|\nabla U|^{2}$, have finite integrals on the unbounded region of $\qM\setminus \ell_{i_{0}}$. The same holds when $\mu=0$ but a few extra words must be said. Having in mind the formula (\ref{WALALA}) we need to select the loops $\ell_{i}$ in such a way that $\int_{\ell_{i}}\theta_{i}dl$ remains uniformly bounded as $i\rightarrow \infty$. We can select the $\ell_{i}$ such that the $q_{r_{i}}$-length tends to zero, but so far we do not have control on the mean curvature with respect to $q_{r_{i}}$ (the geometry is collapsing). One can however take finite covers of the annuli (to have the injectivity radius bounded from above and from below away from zero) converging in $C^{\infty}$ to a $\Sa$-symmetric data and then just take $\ell_{i}$ as loops that lift to loops $\tilde{\ell}_{i}$ converging to an $\Sa$-symmetric loop of finite mean curvature. The loops $\ell_{i}$ then have bounded mean curvature because they are just the mean curvatures of the lifts. Hence, also in this case, the integral of $|\nabla U|^{2}$ on the unbounded component of $\qM\setminus \ell_{i_{0}}$ is bounded.
\vs

We will prove now that $\Lambda$ tends to infinity over the end of $\qM$, namely, for any divergent sequence $p_{i}\rightarrow \infty$, we have $\Lambda(p_{i})\rightarrow \infty$.

\begin{Proposition} Let $(\qM; \hqg,U,V)$ be the quotient of a static data set of M/KN type. Then $\Lambda\rightarrow \infty$ at infinity.
\end{Proposition}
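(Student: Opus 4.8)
The plan is to argue by contradiction, reducing everything to the flux computation already carried out in the proof of Proposition \ref{BOB}. Suppose that $\Lambda$ does not tend to infinity at infinity. Then there is a divergent sequence $p_{i}\rightarrow \infty$ in $\qM$ and a constant $C<\infty$ with $\Lambda(p_{i})\leq C$ for all $i$. Since $\partial \qM\cong \Sa$ is compact, $r_{i}:=\dist(p_{i},\partial \qM)\rightarrow \infty$, and for large $i$ the point $p_{i}$ lies in the annulus $\mathcal{A}(r_{i}/2,2r_{i})$ considered in the proof of Proposition \ref{BOB}.

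The key preliminary step is a local Harnack-type bound: there is a universal constant $C'$ such that, for all large $i$, $\Lambda\leq C'C$ on the whole annulus $\mathcal{A}(r_{i}/2,2r_{i})$. To see this, recall from the proof of Proposition \ref{BOB} that, as $r\rightarrow \infty$, the rescaled annuli $(\mathcal{A}(r/2,2r);\hqg_{r})$ are either collapsing thin cylinders (when $\mu=0$) or converge smoothly to the flat annulus (\ref{LIMANNULUS}) (when $\mu>0$); in either case their $\hqg_{r}$-diameter is bounded by a universal constant $D$, so the $\hqg$-diameter of $\mathcal{A}(r_{i}/2,2r_{i})$ is at most $Dr_{i}$. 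Given any $p\in \mathcal{A}(r_{i}/2,2r_{i})$, join it to $p_{i}$ by a path $\sigma$ inside $\mathcal{A}(r_{i}/2,2r_{i})$ of $\hqg$-length at most $Dr_{i}$; along $\sigma$ one has $\dist(\cdot,\partial \qM)\geq r_{i}/2$, so by (\ref{GVBOUND}), applied to $V=\ln \Lambda$, one gets $|\ln\Lambda(p)-\ln\Lambda(p_{i})|\leq \int_{\sigma}|\nabla V|\,dl\leq \sqrt{c}\,(Dr_{i})/(r_{i}/2)=2D\sqrt{c}$. Hence $\Lambda(p)\leq e^{2D\sqrt{c}}\Lambda(p_{i})\leq C'C$ with $C':=e^{2D\sqrt{c}}$.

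With this bound in hand I would then repeat verbatim the estimates at the end of the proof of Proposition \ref{BOB}, with the global bound $\overline{\Lambda}$ replaced by the local bound $C'C$ valid on $\mathcal{A}(r_{i}/2,2r_{i})$. If $\mu=0$, choose loops $\ell_{i}\subset \mathcal{A}(r_{i}/2,2r_{i})$ with ${\rm length}_{\hqg}(\ell_{i})/r_{i}\rightarrow 0$; then, using (\ref{ANDEREST}), $|\int_{\ell_{i}}\Lambda \nabla_{n}U\,dl|\leq C'C\,(2c/r_{i})\,{\rm length}_{\hqg}(\ell_{i})\rightarrow 0$. If $\mu>0$, choose $\ell_{i}$ dividing $\mathcal{A}(r_{i}/2,2r_{i})$ in two with ${\rm length}_{\hqg}(\ell_{i})/r_{i}$ bounded; since $\kappa_{\hqg_{r_{i}}}=|\nabla U|^{2}r_{i}^{2}\rightarrow 0$ on these annuli, $\max_{\ell_{i}}|\nabla U|=o(1/r_{i})$ and again $|\int_{\ell_{i}}\Lambda \nabla_{n}U\,dl|\rightarrow 0$. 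But each $\ell_{i}$ is isotopic to $\partial \qM$, so by (\ref{INTEGRALL0}) the integral $\int_{\ell_{i}}\Lambda \nabla_{n}U\,dl$ equals the fixed positive number $\sum_{j=1}^{h}K_{j}A_{j}$, a contradiction. Therefore $\Lambda\rightarrow \infty$ at infinity.

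The main obstacle is the Harnack step: one must make sure that the annuli $\mathcal{A}(r_{i}/2,2r_{i})$ are connected with intrinsic $\hqg$-diameter $O(r_{i})$, uniformly in $i$, so that the a priori gradient estimate (\ref{GVBOUND}) for $V=\ln\Lambda$ can be integrated along paths that stay a definite distance (at least $r_{i}/2$) away from $\partial \qM$. This is precisely the asymptotic description (collapsing cylinder versus flat annulus) already extracted in the proof of Proposition \ref{BOB}; once it is invoked, the rest is a routine repetition of that proof.
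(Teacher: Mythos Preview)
Your argument is correct and close in spirit to the paper's, but the route to the contradiction is not the same. Both proofs hinge on the gradient estimate (\ref{GVBOUND}) for $V=\ln\Lambda$ (to propagate a bound on $\Lambda$ from a point to a loop or annulus) and on the flux identity (\ref{INTEGRALL0}).

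The paper does not re-run the flux estimate with a local bound. Instead, it fixes a dyadic sequence of loops $\ell_{j}\subset\mathcal{A}(r_{j}/2,2r_{j})$ of uniformly bounded $\hqg_{r_{j}}$-length, uses (\ref{GVBOUND}) exactly as you do to show that $\max_{\ell_{j}}\Lambda/\min_{\ell_{j}}\Lambda$ is uniformly bounded, and then invokes the \emph{maximum principle} for the harmonic function $\Lambda$: if $\min_{\ell_{j_{k}}}\Lambda$ stayed bounded along a subsequence, then so would $\max_{\ell_{j_{k}}}\Lambda$, and hence $\Lambda$ would be globally bounded on the exhausting regions $\mathcal{U}_{j_{0},j_{k}}$ enclosed by $\ell_{j_{0}}$ and $\ell_{j_{k}}$, contradicting Proposition~\ref{BOB} as stated. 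A second application of the maximum (minimum) principle on the regions $\mathcal{U}_{j,j+1}$ then upgrades $\min_{\ell_{j}}\Lambda\to\infty$ to $\Lambda\to\infty$ everywhere.

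Your route bypasses the maximum principle entirely by carrying the local bound $\Lambda\leq C'C$ on the annulus straight into the flux integral. This is a bit more self-contained (it shows, in effect, that Proposition~\ref{BOB} never really needed a global upper bound on $\Lambda$, only bounds on a divergent sequence of annuli), but at the cost of repeating the case split $\mu=0$/$\mu>0$ from that proof. The paper's use of the maximum principle is the cleaner reduction: once the Harnack bound along loops is in place, everything collapses to a one-line citation of Proposition~\ref{BOB}. The ``main obstacle'' you flag (connectedness and uniform $\hqg_{r}$-diameter of the annuli) is exactly the geometric input the paper also uses, and it is handled there in the same way, via the collapsing/non-collapsing dichotomy.
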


\begin{proof} Let $r_{i}=2^{i}$ and consider the annuli $\mathcal{A}(2r_{i},r_{i}/2)$ endowed with the scaled metric $\hqg_{r_{i}}=\hqg/r_{i}^{2}$. As explained earlier, the type of asymptotic of the manifolds $(\qM; \hqg)$ depends on whether $\mu=0$ or $\mu>0$ but in either case for $i$ large enough we can find loops $\ell_{i}$ embedded in $\mathcal{A}(2r_{i},r_{i}/2)$ and isotopic to $\partial \qM$, whose $\hqg_{r_{i}}$-lengths are uniformly bounded. We will use such loops below.

Now, recall that on $\mathcal{A}(2r_{i},r_{i}/2)$ we have $|\nabla \ln \Lambda|_{\hqg_{r_{i}}}\leq c$. Let $p^{i}_{\max}$ and $p^{i}_{\min}$ be the points on $\ell_{i}$ where $\Lambda$ achieves its max and min respectively, that is, $\Lambda(p^{i}_{\max})=\max_{\ell_{i}}\{\Lambda\}$ and the same for $\min$. Let $\alpha(t)$ be a curve parameterising the part of $\ell_{i}$ from $p^{i}_{\min}$ to $p^{i}_{\max}$. Then,
\be\label{BOUNDED}
\ln \bigg(\frac{\max_{\ell_{i}}\{\Lambda\}}{\min_{\ell_{i}}\{\Lambda\}}\bigg)=|\int_{t_{0}}^{t_{1}}\langle \nabla \ln \Lambda,\alpha'\rangle_{\hqg_{r_{i}}} dt|\leq \int_{t_{0}}^{t_{1}} |\nabla \Lambda|_{q_{r_{i}}}|\alpha'|_{q_{r_{i}}}dt \leq c L.
\ee 
This is an important uniform bound that we will use below.

For any $i<j$ denote by $\mathcal{U}_{i,j}$ the compact region enclosed by $\ell_{i}$ and $\ell_{j}$. Now, if there is a sequence $j_{k}\rightarrow \infty$ as $k\rightarrow \infty$, for which $\min_{\ell_{j_{k}}}\{\Lambda\}$ remains uniformly bounded, then $\max_{\ell_{j_{k}}}\{\Lambda\}$ remains uniformly bounded too by (\ref{BOUNDED}). Then, by the maximum principle we have,
\be
\max_{\mathcal{U}_{j_{0}j_{k}}}\{\Lambda\}\leq \max\{\max_{\ell_{j_{0}}}\{\Lambda\},\max_{\ell_{j_{k}}}\{\Lambda\}\},
\ee
and thus we conclude that $\Lambda$ remains uniformly bounded contradicting Proposition \ref{BOB}. Hence, $\min_{\ell_{j}}\{\Lambda\}$ tends to infinity as $j$ tends to infinity. Now, again by the maximum principle we have,
\be\label{FINAL}
\max_{\mathcal{U}_{j,j+1}}\{\Lambda\}\geq \min\{\min_{\ell_{j}}\{\Lambda\},\min_{\ell_{j+1}}\{\Lambda\}\}\rightarrow_{j\rightarrow \infty}\infty.
\ee
Now, the concatenation of the regions $\mathcal{U}_{j,j+1}$ covers $\qM$ except a bounded region. Hence we conclude from (\ref{FINAL}) that $\Lambda\rightarrow \infty$ at infinity.
\end{proof}

The next proposition says that $\Lambda$ can be taken as the standard Weyl coordinate $\rho$. 

\begin{Proposition} Let $(\qM; \hqg,U,V)$ be the quotient of a static data set of M/KN type. Then $\Lambda$ can be chosen as a global harmonic coordinate and we can write,
\be\label{COORDSYS}
\hqg=e^{2\phi}(d\rho^{2}+dz^{2}),
\ee
where we have changed notation to the usual $\rho=\Lambda$ and where $z$ is the (periodic of period $L$) coordinate harmonically conjugated to $\rho$.
\end{Proposition}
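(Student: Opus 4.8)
The plan is to show that $\Lambda$ has nowhere-vanishing gradient in the interior $\qM^{\circ}=\qM\setminus\partial\qM$ and that each positive level set $\{\Lambda=c\}$ is a single embedded essential circle. Granting this, the $\hqg$-Hodge dual $\hgls\,d\Lambda$ of $d\Lambda$ is closed and has a single non-trivial period $L$, its primitive $z$ is a well-defined $\Sa_{L}$-valued harmonic conjugate of $\Lambda$, and $(\rho,z)=(\Lambda,z)$ is a global coordinate system in which $\hqg$ is automatically isothermal, i.e.\ of the form (\ref{COORDSYS}). So everything reduces to the non-vanishing of $\cd\Lambda$, the connectedness of the level sets, and $L>0$.

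I would first extract the qualitative behaviour of $\Lambda$ from $\Delta\Lambda=0$ (equation (\ref{EQ3})). The strong maximum principle forbids interior local extrema of $\Lambda$; combined with $\Lambda>0$ on $\qM^{\circ}$, $\Lambda=0$ on $\partial\qM$, and $\Lambda\to\infty$ at infinity (the preceding Proposition), this forces, for every $M>0$: that $\{\Lambda\le M\}$ is compact; that $\{\Lambda>M\}$ is connected (a relatively compact component would carry an interior maximum); and that $\{\Lambda<M\}$ is connected (a component whose closure misses $\partial\qM$ would carry an interior minimum, and none reaches the end as $\Lambda\to\infty$). For a regular value $M$, no component of $\{\Lambda=M\}$ bounds a disc in $\qM$, as $\Lambda$ would then be constant on the disc and hence on all of $\qM$ by analytic continuation; since $\qM\cong\Sa\times\mathbb{R}^{+}_{0}$ (Theorem \ref{TTO}), each component is therefore essential, any $b$ of them are mutually parallel and cut $\qM$ into exactly $b+1$ pieces. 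But these $b+1$ pieces, being connected, each lie in $\{\Lambda<M\}$ or in $\{\Lambda>M\}$, which are single connected sets, so $b+1=2$; thus $\{\Lambda=M\}$ is a single essential circle, $\{\Lambda\le M\}$ is a compact connected planar surface with exactly the two boundary circles $\partial\qM$ and $\{\Lambda=M\}$ (hence with $\chi=2-0-2=0$), and $\{\Lambda>M\}$ is an annular neighbourhood of the (by \cite{PartI}, unique) end.

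Next I would rule out interior critical points. In local isothermal coordinates a $\hqg$-harmonic function is the real part of a holomorphic function, so its critical points are isolated of finite multiplicity; having no interior extrema, each is a saddle, and crossing its critical value strictly decreases $\chi(\{\Lambda\le c\})$ by the multiplicity. But by the previous paragraph $\chi(\{\Lambda\le c\})=0$ for every regular value $c$, incompatible with any such decrease. Hence $\Lambda$ has no critical values in $(0,\infty)$ and, as $\Lambda>0$ on $\qM^{\circ}$, no critical points there; every $\{\Lambda=c\}$ with $c>0$ is then a smooth essential circle and $\Lambda:\qM^{\circ}\to(0,\infty)$ is a proper submersion. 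Now $\hgls\,d\Lambda$ is closed ($d(\hgls\,d\Lambda)=(\Delta\Lambda)\,dA_{\hqg}=0$), has vanishing periods over inessential loops, and has period $L:=\int_{\{\Lambda=c\}}\hgls\,d\Lambda=\int_{\{\Lambda=c\}}|\cd\Lambda|\,dl$ over an essential loop, which is independent of the regular value $c$ by Stokes' theorem; $L>0$ because $\cd\Lambda$ is nowhere zero on the circle $\{\Lambda=c\}$ (it is the $\ell$-independent, positive flux of (\ref{INTEGRALL0}) taken with $\Lambda$ in place of $\Lambda\cd U$). So there is a well-defined smooth $z:\qM^{\circ}\to\mathbb{R}/L\mathbb{Z}=\Sa_{L}$ with $dz=\hgls\,d\Lambda$; by construction $d\rho:=d\Lambda$ and $dz$ are $\hqg$-orthogonal of equal length, $z$ restricts to a strictly monotone degree-one map $\{\Lambda=c\}\to\Sa_{L}$, hence $(\rho,z):\qM^{\circ}\to\mathbb{R}^{+}\times\Sa_{L}$ is a proper local diffeomorphism and a bijection, so a diffeomorphism. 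In the coframe $(d\rho,dz)$ the metric is $\hqg=|\cd\rho|^{-2}_{\hqg}\,(d\rho^{2}+dz^{2})=e^{2\phi}(d\rho^{2}+dz^{2})$ with $\phi=-\tfrac12\ln|\cd\rho|^{2}_{\hqg}$, which is (\ref{COORDSYS}); moreover $(\rho,z)$ extends to a homeomorphism $\qM\to\mathbb{R}^{+}_{0}\times\Sa_{L}$ taking $\partial\qM$ onto $\{\rho=0\}$ and the horizon projections onto disjoint closed subintervals of it.

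The step I expect to be the main obstacle is the behaviour at $\partial\qM$: there $\hqg$ degenerates (its conformal factor tends to zero along the horizon projections) and $\partial\qM$ has corners at the axis–horizon junctions, so one has to verify that the separation and no-bounding-disc arguments above, and the boundary extension of $(\rho,z)$, survive; this requires the explicit local form of static axisymmetric data near an axis component and near a horizon component. The Euler-characteristic bookkeeping and the construction of $z$ are, by contrast, routine.
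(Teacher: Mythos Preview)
Your argument is correct and follows essentially the same path as the paper's: use the maximum principle together with the cylinder topology to show that each regular level set of $\Lambda$ is a single essential circle, then an index count (the paper phrases it as Poincar\'e--Hopf on the annulus between two regular levels, you as the Morse-theoretic constancy of $\chi(\{\Lambda\le c\})$) to exclude interior critical points, and finally integrate $\hgls\,d\Lambda$ to obtain $z$. Your worry about $\partial\qM$ is not an obstacle for this proposition, since both the statement and the argument live in the interior of $\qM$; the boundary behaviour of $(\rho,z)$ is handled separately later in the paper.
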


\begin{proof} We know that $\Lambda\rightarrow \infty$ at infinity and that $\Lambda|_{\partial \qM}=0$. Then, the pre-image of any regular value of $\Lambda$ (naturally greater than zero) is necessarily a finite set of circles. As $\Lambda$ is harmonic, the maximum principle implies that none of such circles can enclose a disc. Hence, every circle is isotopic to $\partial \qM$. If there is more than one circle, then any two of them must enclose an annulus which is again ruled out by the maximum principle. Thus, the pre-image of any regular value of $\Lambda$ is a circle isotopic to $\partial \qM$. Fix two regular values $\Lambda_{2}> \Lambda_{1}>0$. Let $\mathcal{A}_{12}$ be the annulus enclosed by $\Lambda^{-1}(\{\Lambda_{1}\})$ and $\Lambda^{-1}(\{\Lambda_{2}\})$. We claim that there are no critical points of $\Lambda$ in $\mathcal{A}_{12}$. To see this, observe that, as $\Lambda$ is harmonic and analytic, the critical points $\{c_{i}\}$ are isolated and of positive (integer) index\footnote{To see this argue as follows. Let $z=x+iy$ be a complex coordinate around a critical point of $\Lambda$ ($z=0$ at the critical point). Let $\Gamma$ be a harmonic conjugate to $\Lambda$, so that $f(z)=\Lambda+i\Gamma$ is analytic. By Cauchy-Riemann $d\Gamma|_{z=0}=0$, hence $z=0$ is also a critical point of $f$. Then, near $z=0$ we have $f(z) \sim c_{1} + c_{2}z^{m}$ for some $m\geq 2$. Moreover by Cauchy-Riemann we have $\partial_{z} f(z)=\partial_{x} \Lambda + i\partial_{x}\Gamma=\partial_{x}\Lambda-i\partial_{y}\Lambda$. Thus, $\nabla \Lambda$ is directly identified to the conjugate of $\partial_{z} f(z)$ (as a vector field). Hence the index at any critical point is always equal to the index of a field $\overline{z}^{n}$ with $n\geq 1$, hence positive.}, and that, by Poincar\'e-Hopf, the sum of the index of the critical points in $\mathcal{A}_{12}$ must be zero. As this is valid for any $\Lambda_{1}$ and $\Lambda_{2}$, it follows that $\Lambda$ does not have critical points.

Make $\rho:=\Lambda$. Now observe that, because $\rho$ is harmonic, the one form $*d\rho$ ($*$ is the $\hqg$-Hodge-star) is closed. Furthermore $*d\rho(\nabla^{i} \rho)=0$ (indeed $*d\rho(\nabla^{i}\rho)=\epsilon_{ij}\nabla^{i}\rho\nabla^{j}\rho=0$ where $\epsilon_{ij}$ is antisymmetric). Thus, $*d\rho$ descends to a one form in the quotient manifold of $\qM$ by the integral curves of the gradient of $\rho$, which is a circle. Define thus $z$ by integrating $dz=*d\rho$ with the initial condition that $z$ is identically zero at one fixed integral curve. 
\end{proof}

In the coordinate $(\rho,z)$, the equations for $\phi$ and $U$ reduce to the Weyl equations,
\begin{align}
\label{WEYLCOOR} & U_{\rho\rho}+\frac{1}{\rho}U_{\rho}+U_{zz}=0,\\
\label{WEYLCOOR2} & \phi_{\rho}=\rho(U_{\rho}^{2}-U_{z}^{2}),\quad \phi_{z}=2\rho U_{\rho}U_{z}.
\end{align}

We will now study the asymptotic of the fields $U$ and $\phi$ as $\rho\rightarrow \infty$, which of course will have the Kasner form,
\be
q=q_{0}\rho^{2u_{0}^{2}}(d\rho^{2}+dz^{2}) + o_{e}(\rho),\quad U=U_{0}+u_{0}\ln \rho+o_{e}(\rho),\quad \phi=\phi_{0}+u_{0}^{2}\ln \rho+o_{e}(\rho).
\ee
where $q_{0}$, $U_{0}$ and $\phi_{0}$ are real constants, and where $|o_{e}(\rho)|\leq \alpha e^{-\beta \rho}$ for some $\alpha>0$ and $\beta>0$. The Kasner form could in principle be obtained by studying directly the reduction of the $\Sa$-symmetric static data set of M/KN type that we are dealing with and that, by its same definition, is assumed to have Kasner asymptotic. Interesting enough however, the Kasner form of the asymptotic of $U$ and $\phi$ will be deduced just from the form (\ref{COORDSYS}) of $q$, from the Weyl equations, and from the fact that the integral of $|\nabla U|^{2}$ over the end of $\qM$ is finite, as was discussed earlier. From these facts we will obtain first a representation of $U$ in series and from it we will read off the asymptotic of $U$ itself and of $\phi$. This is the content of the next proposition.

\begin{Proposition}\label{KASSIMP} Let $(\qM; \hqg,U,V)$ be the quotient of a static data set of M/KN type. Then the asymptotic of $U$ and $\phi$ is Kasner. 
\end{Proposition}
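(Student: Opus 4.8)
The plan is to use that, in the Weyl coordinates $(\rho,z)$ constructed in the previous proposition, $z$ is periodic of period $L$, so the end of $\qM$ is the flat cylinder $\{\rho\ge\rho_{0}\}\cong[\rho_{0},\infty)\times\Sa$, and on it $U$ satisfies the flat axisymmetric equation (\ref{WEYLCOOR}) (the conformal factor $e^{2\phi}$ drops out of (\ref{EQ2}), which is why the reduction to (\ref{WEYLCOOR}) holds regardless of $\phi$). I would expand $U$ in a Fourier series in the periodic variable,
\be
U(\rho,z)=\sum_{n\in\mathbb{Z}}U_{n}(\rho)\,e^{2\pi i n z/L},
\ee
so that each coefficient solves $U_{n}''+\rho^{-1}U_{n}'-(2\pi n/L)^{2}U_{n}=0$. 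For $n=0$ this forces $U_{0}(\rho)=U_{0}+u_{0}\ln\rho$ for constants $U_{0},u_{0}$; for $n\neq0$ it is the modified Bessel equation of order zero, with independent solutions $I_{0}(2\pi|n|\rho/L)$, growing like $\rho^{-1/2}e^{2\pi|n|\rho/L}$, and $K_{0}(2\pi|n|\rho/L)$, decaying like $\rho^{-1/2}e^{-2\pi|n|\rho/L}$.

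The central step is to discard the growing $I_{0}$ modes, and for this I would use the finiteness of $\int_{\Omega}|\nabla U|^{2}dA$ established in the discussion preceding the Proposition (take $\Omega=\{\rho\ge\rho_{0}\}$, whose complement is open, bounded, and contains $\partial\qM$ since $\Lambda=\rho\to\infty$ at infinity). The key observation is that in two dimensions the Dirichlet integral is conformally invariant: with $\hqg=e^{2\phi}(d\rho^{2}+dz^{2})$ one has $|\nabla U|_{\hqg}^{2}\,dA_{\hqg}=(U_{\rho}^{2}+U_{z}^{2})\,d\rho\,dz$, so the a priori uncontrolled factor $e^{2\phi}$ disappears. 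By Parseval, $\int_{\rho_{0}}^{R}\!\!\int_{\Sa}(U_{\rho}^{2}+U_{z}^{2})\,dz\,d\rho$ equals a constant times $\sum_{n}\int_{\rho_{0}}^{R}\big(|U_{n}'|^{2}+(2\pi n/L)^{2}|U_{n}|^{2}\big)d\rho$, and a nonzero $I_{0}$-component in any $n\neq0$ would make this diverge as $R\to\infty$ — a contradiction (the $\ln\rho$ term in $U_{0}$ is harmless, contributing a convergent $\int\rho^{-2}d\rho$). Hence, on the end, $U(\rho,z)=U_{0}+u_{0}\ln\rho+\sum_{n\neq0}c_{n}K_{0}(2\pi|n|\rho/L)e^{2\pi i n z/L}$. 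To see this tail is $o_{e}(\rho)$ together with all of its $\rho$- and $z$-derivatives, I would fix a large radius $\rho_{1}$, use that $U(\rho_{1},\cdot)$ is smooth on $\Sa$ — hence has rapidly decreasing Fourier coefficients — to get $|c_{n}|\le C_{N}|n|^{-N}e^{2\pi|n|\rho_{1}/L}$ for every $N$, and then for $\rho>\rho_{1}$ bound the tail and its derivatives term by term against $\sum_{n\neq0}|n|^{-N}e^{-2\pi|n|(\rho-\rho_{1})/L}\le Ce^{-2\pi(\rho-\rho_{1})/L}$. This gives $U=U_{0}+u_{0}\ln\rho+o_{e}(\rho)$, and in particular $U_{\rho}=u_{0}/\rho+o_{e}(\rho)$, $U_{z}=o_{e}(\rho)$.

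Finally I would feed this into the quadrature equations (\ref{WEYLCOOR2}): substituting the above yields $\phi_{\rho}=\rho(U_{\rho}^{2}-U_{z}^{2})=u_{0}^{2}/\rho+o_{e}(\rho)$ and $\phi_{z}=2\rho U_{\rho}U_{z}=o_{e}(\rho)$, since products and $\rho$-multiples of exponentially small quantities are again of the form $\le\alpha e^{-\beta\rho}$ (with a slightly smaller $\beta$). Integrating $\phi_{\rho}-u_{0}^{2}/\rho$ in $\rho$ (its primitive converges, the error being integrable) and using $\phi_{z}=o_{e}(\rho)$ to see the limit is $z$-independent, one gets $\phi=\phi_{0}+u_{0}^{2}\ln\rho+o_{e}(\rho)$ for a constant $\phi_{0}$, with the error bounded by the exponentially small tail $\int_{\rho}^{\infty}o_{e}$. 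Exponentiating, $\hqg=e^{2\phi_{0}}\rho^{2u_{0}^{2}}e^{2o_{e}(\rho)}(d\rho^{2}+dz^{2})=q_{0}\rho^{2u_{0}^{2}}(d\rho^{2}+dz^{2})+o_{e}(\rho)$ with $q_{0}=e^{2\phi_{0}}$, which is the claimed Kasner form. I expect the only genuinely delicate point to be the second step: one must be sure that the unknown conformal factor $e^{2\phi}$ cannot interfere with either the energy identity (it cannot, by 2D conformal invariance) or the separation of variables (it cannot, since $U$ obeys the flat equation (\ref{WEYLCOOR}) in these coordinates); granting this, the growing Bessel modes die by a single $L^{2}$ argument and the rest is bookkeeping with exponentially small remainders.
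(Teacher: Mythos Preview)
Your proposal is correct and follows essentially the same route as the paper: Fourier--Bessel decomposition of $U$ in the periodic variable, then elimination of the growing $I_{0}$ modes using the finiteness of $\int|\nabla U|^{2}dA$ together with the conformal invariance of the two-dimensional Dirichlet integral, and finally the quadrature for $\phi$. The only cosmetic differences are that the paper tests against a single mode via $\int_{\mathcal{A}_{01}}U_{\rho}e^{inz}\,d\rho\,dz$ and Cauchy--Schwarz (rather than Parseval on the full energy) and writes $\phi$ as an explicit Bessel series rather than integrating the estimates; neither changes the substance of the argument.
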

\begin{proof} We will work with the coordinates $(\rho,z)$. Without loss of generality assume that $z\in [0,2\pi)$. To obtain the expression of $U$ and $\phi$ when the range of $z$ is not $[0,2\pi)$ rather $[0,L)$, just make the transformations $z\rightarrow (L/2\pi)z$ and $\rho\rightarrow (L/2\pi)\rho$.

Separating variables, the solution to (\ref{WEYLCOOR}) is always expressible in series as,
\be
U=U_{0}+u_{0}\ln \rho+\sum_{n\in \mathbb{Z}\setminus \{0\}}(u_{n}K_{0}(n\rho)+v_{n}I_{0}(n\rho))e^{niz},
\ee
where $u_{n}=u_{-n}^{*}, v_{n}=v_{-n}^{*}$ for each $n$ and where $I_{0}(x)$ and $K_{0}(x)$ are the modified Bessel functions of first and second kind respectively\footnote{$I_{0}(x)$ and $K_{0}(x)$ are linearly independent solutions of the ODE, $x^{2}y''+xy'-x^{2}y=0$.}. These functions have the asymptotic
\be\label{AYE}
K_{0}(x)\approx \sqrt{\frac{\pi}{2}}\frac{\diss e^{-x}}{\sqrt{x}},\qquad I_{0}(x)\approx \frac{1}{\sqrt{2\pi}}\frac{\diss e^{x}}{\sqrt{x}}.
\ee
If we show that $v_{n}=0$ for all $n$ then a direct computation using (\ref{WEYLCOOR2}) and the standard properties of the modified Bessel functions brings
\begin{align}
\phi-\phi_{0}= &u_{0}^{2}\ln \rho +2 \sum_{n\in \mathbb{Z}\setminus \{0\}}  u_{0}u_{n}K_{0}(n\rho)e^{inz}\\
& - \sum_{n,m\in \mathbb{Z}\setminus \{0\}}  u_{n}u_{m}\frac{nm}{n+m}\rho\big(K_{0}(m\rho)K_{1}(n\rho)+K_{0}(n\rho)K_{1}(m\rho)\big)e^{i (n+m)z}
\end{align}
Hence, if $v_{n}=0$ for all $n$ then the static end converges exponentially in $\rho$ to the Kasner solution
\be\label{KASSSI}
q=q_{0}\rho^{2u_{0}^{2}}(d\rho^{2}+dz^{2}),\quad U=U_{0}+u_{0}\ln \rho,\quad \phi=\phi_{0}+u_{0}^{2}\ln \rho
\ee
We show now that indeed $v_{n}=0$ for all $n$.

Let $\mathcal{A}_{01}=\{(\rho,z):\rho_{0}\leq \rho\leq \rho_{1}\}$. Now observe that
\be
\int_{\mathcal{A}_{01}} U_{\rho}e^{inz}\, d\rho dz = 2\pi(u_{n}K_{0}(n\rho)+v_{n}I_{0}(n\rho))\bigg|_{\rho_{0}}^{\rho_{1}}
\ee
Now, if $v_{n}\neq 0$, then we can use the asymptotic expansion (\ref{AYE}) and deduce that the integral in the left hand side has exponential growth in $\rho_{1}$. On the other hand, as shown earlier, ends have bounded total curvature, therefore as $\kappa=|\nabla U|^{2}$ we get,
\be
\int_{\rho\geq \rho_{1}}U_{\rho}^{2}\, d\rho dz<\infty
\ee
(use that $\kappa=|\nabla U|^{2}$ and that the expression $|\nabla U|^{2}dA$ is conformal invariant). Next use Cauchy-Schwarz to get
\be
\big|\int_{\mathcal{A}_{01}} U_{\rho}e^{inz}\, d\rho dz \big|\leq \big(\int_{\mathcal{A}_{01}}U_{\rho}^{2}\, d\rho dz\big)^{1/2}(2\pi(\rho_{1}-\rho_{0})\big)^{1/2}
\ee
But the left hand side grows faster than the right hand side and we reach a contradiction. Hence $v_{n}=0$ for all $n$.
\end{proof}

Let's recall what we have so far. The function $U(\rho,z)$ is smooth on the quotient manifold $\Sa_{L}\times \mathbb{R}^{+}$, where $z$ is the coordinate of the factor $\Sa_{L}$ (a circle of length $L$) and $\rho$ is the coordinate of the factor $\mathbb{R}^{+}$, and satisfies $U_{zz}+U_{\rho\rho}+\rho^{-1}U_{\rho}=0$. Let $\pi:\Sigma\rightarrow \Sa_{L}\times \mathbb{R}^{+}_{0}$ be the projection. Then the projection of the horizons components, $\pi(\mathcal{H}_{1}),\ldots,\pi(\mathcal{H}_{h})$, are $h$ disjoint closed segment on $\Sa_{L}\times \{0\}$ of $z$-lengths $L_{1},\ldots,L_{h}$ respectively (i.e. if $\pi(\mathcal{H}_{i})=\{(z,0):z_{i}^{-}\leq z\leq z_{i}^{+}\}$, then $L_{i}=z_{i}^{+}-z_{i}^{-}$). To prove Theorem \ref{MAINTHEOREM}, it remains to show that $U$ is actually equal to the M/KN potential $U_{\pi(\mathcal{H}_{1}),\ldots,\pi(\mathcal{H}_{h})}$ up to a constant. We will show that in what follows. Recall from the introduction that $U_{\pi(\mathcal{H}_{1}),\ldots,\pi(\mathcal{H}_{h})}$ is the sum of the potentials $U_{\pi(\mathcal{H}_{i})}$,
\be
U_{\pi(\mathcal{H}_{1}),\ldots,\pi(\mathcal{H}_{h})}=\sum_{i=1}^{h}U_{\pi(\mathcal{H}_{i})}.
\ee

To simplify notation let us make $U_{\pi(\mathcal{H}_{1}),\ldots,\pi(\mathcal{H}_{h})}=\tilde{U}$. We will show that, 
\be
w=U-\tilde{U},
\ee
is actually a constant function. To achieve that we will use an integral identity and the behaviour of $w$ near the segments $\pi(\mathcal{H}_{1}),\ldots,\pi(\mathcal{H}_{h})$ and near infinity. Let us prove this finishing thus the proof of Theorem \ref{MAINTHEOREM}.

First, as $w$ satisfies $w_{zz}+w_{\rho\rho}+\rho^{-1}w_{\rho}=0$, then $div(w\rho dw)=\rho |dw|^{2}=\rho(w_{z}^{2}+w_{\rho}^{2})$. Integrating using Gauss's theorem we obtain the integral identity,
\be
\int_{\rho_{0}\leq \rho\leq \rho_{1}} \rho(w_{z}^{2}+w_{\rho}^{2})dzd\rho=\int_{\rho=\rho_{1}}\rho ww_{\rho}dz-\int_{\rho=\rho_{0}}\rho ww_{\rho}dz
\ee
for any $0<\rho_{0}<\rho_{1}$. We will show that the boundary terms on the right hand side tend to zero as $\rho_{0}\rightarrow 0$ and $\rho_{1}\rightarrow \infty$, thus proving that $w$ is constant. 

We treat first the case $\rho_{1}\rightarrow \infty$ and then we treat the case $\rho_{0}\rightarrow 0$. 

Case $\rho_{1}\rightarrow \infty$. We begin recalling that $U$ and $\tilde{U}$ have the following expansions as $\rho\rightarrow \infty$,
\begin{gather}\label{UNO}
U=U_{0}+u_{0}\ln \rho +o(1/\rho^{2}),\quad U_{\rho}=\frac{u_{0}}{\rho}+o(1/\rho^{3}),\\
\label{DOS}
\tilde{U}=\tilde{U}_{0}+\tilde{u}_{0}\ln \rho+o(1/\rho^{2}),\quad \tilde{U}_{\rho}=\frac{\tilde{u}_{0}}{\rho}+o(1/\rho^{3}).
\end{gather}
(we know that the decay in the $o$'s is faster indeed but that won't be needed). Thus, if $u_{0}=\tilde{u}_{0}$, then $w=w_{0}+o(1/\rho^{2})$ and $w_{\rho}=o(1/\rho^{3})$. It follows that $\rho ww_{\rho}=o(1/\rho^{2})$ so,
\be
\int_{\rho=\rho_{1}}\rho ww_{\rho}dz\rightarrow 0
\ee
as $\rho_{1}\rightarrow \infty$. We will prove that indeed,
\be
u_{0}=\tilde{u}_{0}=\frac{L_{1}+\ldots+L_{h}}{L}<1
\ee
We will do so by showing that the integrals,
\be\label{INTEGRALSVALUE}
\int_{\rho=\rho_{0}}\rho U_{\rho}dz,\quad \int_{\rho=\rho_{0}}\rho \tilde{U}_{\rho}dz,
\ee
that we know do not depend on $\rho_{0}$, are both equal to $L_{1}+\ldots+L_{h}$. Let us begin calculating the first integral. To start we note that $\rho$ has the expression, $\rho=|\partial_{\varphi}|_{g}e^{U}$, and that the factor $|\partial_{\varphi}|_{g}$ is a smooth non-singular function on the three-manifold $\Sigma$. Then, around any point $(0,z_{0})$ on a horizon component $\pi(\mathcal{H}_{i})$ except the pole points, (i.e. if $\pi(\mathcal{H}_{i})=\{z^{-}_{i}\leq z\leq z^{+}_{i}\}$ then $z^{-}_{i}<z_{0}<z^{+}_{i}$), we have the expansion,
\be\label{UEST}
U=\ln \rho + O(\rho,z),\quad \partial_{\rho} U=\frac{1}{\rho}+O(\rho,z).
\ee
As we will see now, these expansions allow us to show that the first integral of (\ref{INTEGRALSVALUE}) takes indeed the value $L_{1}+\ldots+L_{h}$. We already know that the integral is equal to,
\be\label{A}
\sum_{j=1}^{j=h}\int_{\mathcal{H}_{j}}\nabla_{n}NdA,
\ee
where in this expression each summand can be obviously calculated as,
\be\label{B}
\lim_{\epsilon\rightarrow 0}\int_{\mathcal{H}_{j}^{\epsilon}}\nabla_{n} NdA,
\ee
where $\mathcal{H}_{j}^{\epsilon}=\pi^{-1}(\{(0,z):z^{-}_{j}+\epsilon<z<z^{+}_{j}-\epsilon\})$ is the horizon minus small discs at the poles. The advantage of doing that is that now we can write without risk,
\be
\int_{\mathcal{H}_{j}^{\epsilon}}\nabla_{n} NdA = \lim_{\rho_{0}\rightarrow 0} \int_{z^{-}_{j}+\epsilon<z<z^{+}_{j}-\epsilon} \rho_{0} U_{\rho}(\rho_{0},z)dz = z^{+}_{j}-z^{-}_{j}-2\epsilon,
\ee
where to obtain the limit as $\rho_{0}\rightarrow 0$ we have used the expansions (\ref{UEST}). Plugging this back into (\ref{B}) and taking the limit $\epsilon\rightarrow 0$, and then plugging the result into (\ref{A}) we obtain,
\be
\int_{\rho=\rho_{0}}\rho U_{\rho}dz=L_{1}+\ldots+L_{h},
\ee
as wished. The calculation of the second integral in (\ref{INTEGRALSVALUE}) is done in exactly in the same fashion, except that now we work separately over each M/KN data set constructed out of each potential $U_{\pi(\mathcal{H}_{i})}$ to show that $\int_{\rho=\rho_{0}}\rho U_{\pi(\mathcal{H}_{i}),\rho}dz=L_{i}$ and hence that the second integral in (\ref{INTEGRALSVALUE}) is equal to $L_{1}+\ldots+L_{h}$. 

Case $\rho_{0}\rightarrow 0$. Calculating the integral in this case is in a sense simpler, except for the integral near the poles that has to be properly justified. Let $\epsilon>0$ be sufficiently small. Let $Z_{\epsilon}$ be the set of $z\in \Sa_{L}$ such that, the distance to any end point of the intervals $\pi(\mathcal{H}_{i})$ is larger than $\epsilon$. Let $\mathcal{H}_{\epsilon}$ be the set of those points of $Z_{\epsilon}$ lying in $\pi(\mathcal{H})$, and let $\mathcal{A}_{\epsilon}$ be the set of those points of $Z_{\epsilon}$ not lying in $\pi(\mathcal{H})$. Then, we can write,
\begin{align}
\int_{\rho=\rho_{0}}\rho ww_{\rho} dz = & \int_{\rho=\rho_{0}, z\in \mathcal{H}_{\epsilon}}\rho ww_{\rho}dz + \int_{\rho=\rho_{0}, z\in \mathcal{A}_{\epsilon}} \rho ww_{\rho}dz\ + \\
\label{FINALII} & \int_{\rho=\rho_{0}, z\in (\mathcal{H}_{\epsilon}\cup \mathcal{A}_{\epsilon})^{c}}\rho ww_{\rho}dz
\end{align}
We fix $\epsilon>0$ small and let $\rho_{0}\rightarrow 0$. Let us explain the limit of each of the three integrals on the right. On points over the horizon that are not the poles, both, $U$ and $\tilde{U}$, have the same expansion (\ref{UEST}), so $w=O(\rho,z)$ and $w_{\rho}=O(\rho,z)$. Hence the first integral on the right hand side tends to zero. On points over the axis both $U$ and $\tilde{U}$ are smooth and thus $w$ and $w_{\rho}$ are also smooth. Hence the limit of the second integral on the right hand side is zero too. For the third integral, that is (\ref{FINALII}), we show below that for $\epsilon$ small $\rho ww_{\rho}$ remains bounded as $\rho_{0}\rightarrow 0$ (the bound does not depend on $\epsilon$, indeed tends to zero as $\epsilon$ tends to zero). Hence, the third integral remains bounded by $C\epsilon$ es $\rho_{0}\rightarrow 0$. Together with the other two limits explained earlier, we obtain that the limsup as $\rho_{0}\rightarrow 0$ of the integral,
\be
\int_{\rho=\rho_{0}}\rho w w_{\rho} dz,
\ee
is less or equal than $C\epsilon$, for all $\epsilon>0$ small. Hence the limit is indeed zero as wished. 

To finish, let us show that $\rho ww_{\rho}$ on the integrand of (\ref{FINALII}) remains bounded. The key is to change coordinates around each of the poles (assume $z=0$ at the given pole) from the harmonic coordinates $(\rho,z)$ to the harmonic coordinates $u\geq 0,v\geq 0$, defined by $(v^{2}-u^{2})/2=z$, $uv=\rho$. Near the pole, the axis is $\{u=0, v\geq 0\}$ while the horizon is $\{v=0,u\geq 0\}$. The coordinates $(u,v)$ are correlated with normal coordinates on $\Sigma$. Indeed, let $p\in \partial \Sigma$ be the pole. Let $\gamma(s):[0,\mu]\rightarrow \Sigma$, ($\mu$ small), be a geodesic parameterised by arc-length $s$ and tangent to $\partial \Sigma$ at $p=\gamma(0)$. From each $\gamma(s)$ let $\delta_{s}(t):[0,\nu]\rightarrow \Sigma$ be a geodesic perpendicular to $\partial \Sigma$ at $\delta_{s}(0)=\gamma(s)$. In this way $(s,t)\in [0,\mu]\times [0,\nu]$ are normal smooth coordinates around the pole $p$. Near the pole, the axis is $\{s=0,t\geq 0\}$ while the horizon is $\{t=0,s\geq 0\}$. The change of variables $(s,t)\rightarrow (u,v)$ is smooth. Thus, the lapse functions $N=e^{U}$, which is smooth as a function of $(s,t)$, is also smooth as a function of $(u,v)$, and therefore has a expansion,
\be
N=cv + p(u,v),
\ee
where $c$ is a positive constant, $p$ is smooth, and furthermore $p(u,0)=0$, $\partial_{v} p(u,0)=0$. Similarly\footnote{This time define $(s,t)$ on a M/KN solution.}, $\tilde{N}=e^{\tilde{U}}$ is a smooth function of $(u,v)$, has a expansion $\tilde{N}=\tilde{c}v + \tilde{p}(u,v)$ and $\tilde{p}(u,0)=0$, $\partial_{v}\tilde{p}(u,0)=0$. Hence, using Taylor expansions, we can factor out $p$ and $\tilde{p}$ as $p(u,v)=v^{2}q(u,v)$ and $\tilde{p}(u,v)=v^{2}\tilde{q}(u,v)$ with $q$ and $\bar{q}$ smooth. Thus, near the pole we have $U=\ln c+\ln v + \ln (1+vq)$ and $\tilde{U}=\ln \tilde{c}+\ln v +\ln (1+v\tilde{q})$. Hence $w=\ln c/\tilde{c}+\ln (1+vq/c)/(1+v\tilde{q}/\tilde{c})$ is smooth near the pole and is zero at it. Furthermore a simple calculation gives,
\be
\rho\partial_{\rho}w=u\bigg(\frac{v^{2}}{u^{2}+v^{2}}\bigg)\partial_{u}w+v\bigg(\frac{u^{2}}{u^{2}+v^{2}}\bigg)\partial_{v}w
\ee
As $w$ is smooth and the two factors in parenthesis are bounded by one we deduce that $\rho ww_{\rho}=w\rho w_{\rho}$ is also bounded.
 
This finishes proving that $U$ differs from $\tilde{U}$ by a constant, completing the proof of the Theorem \ref{MAINTHEOREM}. 
 
\bibliographystyle{plain}
\bibliography{Master} 
 
\end{document}